\def\apxmode{inline}
\newcommand{\exec}{{\mathsf{Exec}}}
\newcommand{\view}{{\mathsf{View}}}
\newcommand{\cview}{{\mathsf{CView}}}
\newcommand{\correct}{{\mathsf{Correct}}}
\newcommand{\participating}{{\mathsf{Part}}}
\newcommand{\compatible}{{\mathsf{Compatible}}}
\newcommand{\decision}{{\mathsf{Decision}}}
\newcommand{\timecview}[2]{{\mathsf{CView}}_{#1}|_{#2}}
\newcommand{\m}[1]{\ensuremath{\mathcal{#1}}}
\newcommand{\init}{{\mathsf{Init}}}
\newcommand{\inputs}{{\mathcal{I}}}
\newcommand{\outputs}{{\mathcal{O}}}
\newcommand{\task}{{T}}
\newcommand{\protocol}{{\mathcal{P}}}
\newcommand{\timeprotocol}[1]{{\mathcal{P}}|_{#1}}
\newcommand{\terminating}{{\mathscr{T}}}
\newcommand{\Vin}{{V^{\mathrm{in}}}}
\newcommand{\Vout}{{V^{\mathrm{out}}}}
\DeclareMathOperator{\Chr}{Chr}
\DeclareMathOperator{\sta}{st}
\DeclareMathOperator{\diam}{diam}
\DeclareMathOperator{\id}{id}
\newcommand{\figref}[1]{{Fig.~\ref{#1}}}
\newcommand{\ie}{{\emph{i.e.}}}
\newcommand{\eg}{{\emph{e.g.}}}
\title{Topological Characterization of Task Solvability in General Models of Computation}
\titlerunning{Task Solvability in General Models of Computation}
\author{Hagit Attiya}{Department of Computer Science, Technion, Israel}{hagit@cs.technion.ac.il}{https://orcid.org/0000-0002-8017-6457}{partially supported by the Israel Science Foundation (grants 380/18 and 22/1425)}
\author{Armando Castañeda}{Instituto de Matem\'aticas, Universidad Nacional Aut\'onoma de M\'exico, Mexico}{armando.castaneda@im.unam.mx}{https://orcid.org/0000-0002-8017-8639}{partially supported by the DGAPA PAPIIT project IN108723}
\author{Thomas Nowak}{Laboratoire Méthodes Formelles, Université Paris-Saclay, CNRS, ENS Paris-Saclay, France \and Institut Universitaire de France, France}{thomas@thomasnowak.net}{https://orcid.org/0000-0003-1690-9342}{partially supported by the ANR grant ANR-21-CE48-0003}
\authorrunning{Attiya, Castañeda and Nowak}
\keywords{task solvability, combinatorial topology, point-set topology}
\begin{document}

\maketitle

\begin{abstract}
The famous \emph{asynchronous computability theorem} (ACT) relates the existence 
of an asynchronous wait-free shared memory protocol 
for solving a task with the existence of a simplicial map
from a subdivision of the simplicial complex representing the inputs to 
the simplicial complex
representing the allowable outputs.
The original theorem relies on a correspondence between protocols 
and simplicial maps in round-structured models of computation that induce 
a compact topology.
This correspondence, however, is far from obvious for computation models that induce a 
non-compact topology, and indeed previous attempts to extend the ACT have failed. 

This paper shows that in \emph{every} non-compact model,
protocols solving tasks correspond to simplicial maps that need to be \emph{continuous}.
It first proves a \emph{generalized} ACT for sub-IIS models,
some of which are non-compact,
and applies it to the \emph{set agreement} task. 
Then it proves that in general models too, protocols are simplicial
maps that need to be continuous, hence showing that 
the topological approach is \emph{universal}.
Finally, it shows that 
the approach used in ACT that equates protocols and simplicial complexes
actually works for \emph{every} compact model.

Our study combines, for the first time, combinatorial and point-set topological aspects of 
the executions admitted by the computation model. 
\end{abstract}

\newpage

\section{Introduction}

The celebrated topological approach in distributed computing relates 
task solvability to the topology of inputs and outputs of the task
and the topology of the protocols allowed in a particular
model of computation. 
This approach rests on three pillars.
First, \emph{configurations}, 
whether of inputs, outputs or protocol states, 
can be modeled as \emph{simplexes}, which are finite sets.
Second, the inherent \emph{indistinguishability} of configurations
is crisply captured by intersections between simplexes.
Third, a \emph{carrier map} captures the notion of the set of 
configurations that are reachable from a given configuration.

More concretely, in this approach,
\emph{tasks} are triples $T = (\inputs, \outputs, \Delta)$,
where $\inputs$ and $\outputs$ are \emph{simplicial complexes} modeling 
the inputs and outputs of the task, and 
$\Delta$ is a \emph{carrier map} specifying the possible valid outputs, $\Delta(\sigma)$, 
for each input simplex $\sigma \in \inputs$.
Similarly, \emph{protocols} are triples $P = (\inputs, \protocol, \Xi)$,
where $\protocol$ is the complex modeling the final configurations of the protocol,
and $\Xi$ is a carrier map specifying the reachable final configurations, $\Xi(\sigma)$,
from $\sigma$.

With this perspective in mind, it is natural to conclude that 
a protocol maps final states (\ie, states in final configurations) to outputs, 
and for the protocol to be correct, the mapping must be \emph{simplicial};
that is, all outputs of final states in the same simplex $\tau \in \Xi(\sigma)$ 
(\ie, in the same final configuration) must be in the same output simplex of $\Delta(\sigma)$.
Thus, a protocol induces a simplicial map from $\protocol$ to $\outputs$. 
Moreover, since decisions 
of processes are only based on local information, it is natural
to conclude the converse, \ie, \emph{any} simplicial map implies
a protocol.
(In general, a protocol specifies also the communication during an execution.
However, solvability can consider only existence of a decision function,
by assuming that the protocol is \emph{full-information}.)
This leads to the following
purely topological solvability characterization:
a protocol $P = (\inputs, \protocol, \Xi)$ solves 
a task $T = (\inputs, \outputs, \Delta)$ \emph{if and only if} 
there is a simplicial map $\delta: \protocol \rightarrow \outputs$
such that for every $\sigma \in \inputs$, we have
$(\delta\circ\Xi)(\sigma) \subseteq \Delta(\sigma)$.

The discussion so far did not depend on a particular model of computation.
Indeed, this approach seems universal and gives the impression 
that protocols and simplicial maps are the same,
and that for all models, the solvability question can be reduced to 
the existence of a simplicial map. 
In fact, the correspondence between protocols and simplical maps 
seems so self-evident that frequently the characterization above seems to
require no proof, and is introduced as a definition 
(\eg,~\cite[Section~4.2.2; Definition~8.4.2]{HKR-book}).

This approach works well for cases where the model of computation
has a particular round structure\footnote{
Rounds may be explicit, 
like in the synchronous message-passing $t$-resilient model, 
or implicit, for example, modeled by \emph{layers}, 
as in the Iterated Immediate Snapshot model.}
and it induces a \emph{compact} topology so that the correspondence 
between protocols and simplicial maps holds.
Roughly speaking, a compact topological space 
has no ``punctures'' or ``missing endpoints'', namely,
it does not exclude any limit point.
If a model of computation is specified 
as a set of infinite executions, 
then a compact model will contain all its ``limit executions''.
For example, the \emph{Iterated Immediate Snapshot} (IIS) model ensures that 
computation proceeds in sequence of (implicit) rounds;
in each round, any of a \emph{finite} set of possible schedules can happen.
Thus, the model contains every infinite execution with this round structure.
Models like IIS are sometimes called \emph{oblivious}~\cite{CGP15},
and are known to induce \emph{finite complexes with compact topology,
where protocols and simplicial maps are the same}.
Round-structured compact models 
have been extensively studied in the literature,
and different techniques have been developed for them (\eg,~\cite{AR02,CFPRRT2021,FGL21,HR12}). 

However, this approach is not true in all models,
specifically, in non-compact ones.
In a non-compact model,
typically some ``good'' schedules only \emph{eventually} happen,
which then implies that the model is not limit-closed.
Examples of a non-compact models are $t$-resilient \emph{asynchronous} models
where any process is guaranteed to eventually obtain information 
from at least $t-1$ other processes infinitely often, 
but the process can take an unbounded number
of steps before that happens. 
This means, for example, that the model contains every
infinite execution where a process runs solo for a \emph{finite} number of
steps and then obtains information from $t-1$ other processes,
but it does not contain the infinite solo execution of the process, 
\ie, the limit execution.

Challenging non-compact models have been mostly treated in the literature indirectly,
through ``compactification''.
Sometimes, compactification consists of considering only
protocols with a concrete round structure,
as is done in some chapters of~\cite{HKR-book}.
In other cases, a computationally-equivalent round-structured compact model is 
analyzed instead (\eg,~\cite{KRH18, SHG16}).
This requires first to prove that the models are equivalent, 
through simulations in both directions, 
and then characterize solvability in the compact model.
For impossibility results, it is also sometimes possible to identify a compact sub-model in which a problem of interest is still unsolvable~\cite{LM95}.

Round-structured compact models have also served to analyze 
other compact models. 
In the famous \emph{asynchronous computability theorem} 
(ACT)~\cite{HS1999},
fully characterizing solvability of the non-round-structured 
compact read/write wait-free shared memory model, 
a crucial step is showing equivalence with the IIS model.
This restricts the solvability characterization to
subdivisions, which are well-behaved topological spaces,
making the ACT highly useful.

Some sub-IIS models, 
with subsets of IIS like $t$-resilient computations, 
are non-compact. 
For this reason, 
an attempt~\cite{GKM14} to generalize the ACT to arbitrary sub-IIS models and tasks
had to directly address non-compact models.
The idea of this so-called \emph{generalized} ACT is to 
somehow reuse the nice structure of IIS, 
modeling any sub-IIS model as a possibly infinite subdivision.

This raises two important questions that have not been explicitly investigated so far,
which are addressed in this paper. 
(1) Can protocols in all models of distributed computation
be captured as simplicial maps? 
(2) Can the topological approach be applied to all models of computation? 
The answers to these questions are not self-evident since there may be non-compact
models or non-round-structured compact models, which cannot be compactified.

We note that there is already a recent negative answer to the first question:
Godard and Perdereau~\cite{GP20} showed a non-compact model where consensus 
is unsolvable, but nevertheless,
it has a simplicial map as described above.
Roughly, it considers a sub-IIS model where a single infinite execution
of IIS is removed.
The resulting model is not compact. It turns out that the complex of a protocol is an \emph{infinite} subdivision that is \emph{disconnected},
hence, there is a simplicial map from it to the consensus output complex (which is disconnected too).
But the map does not imply a consensus protocol, 
since intuitively, the decisions must be consistent as they approach 
the discontinuity of the removed execution.
More specifically, the simplicial map does not imply a protocol because it is \emph{not continuous}. 
Section~\ref{sec:need-conitnuity} details the example based on their ideas.
While continuity of simplicial maps is guaranteed in compact models,
this is not the case in non-compact ones.
This example demonstrates that the generalization in~\cite{GKM14} 
is flawed as it misses the continuity property of simplicial maps.
Godard and Perdereau also correct this problem 
\emph{for the special case of two processes and the consensus task}.

Continuity of simplicial maps may seem trivial, 
but it was overlooked for long time, before~\cite{GP20}.
Here, we further expose its importance.

We first study task solvability in the well-structured 
simplicial complexes induced by sub-IIS models. 
Our first contribution (Theorem~\ref{thm:GACT-fixed}) is to 
present a correct generalized ACT for any number of processes 
and arbitrary tasks.
Our approach 
is motivated by the critical role of continuity.
Our second contribution (Theorem~\ref{theo-set-agreement}) is to use
our generalized ACT theorem in order to provide an 
impossibility condition for set agreement in sub-IIS models,
where the continuity requirement of simplicial maps
allows a natural generalization of the known
impossibility conditions for round-structured compact models.

While this settles the questions for sub-IIS models, 
the questions for general models remain open.
Our third contribution (Theorem~\ref{thm:topology-approach}) 
shows that the topological approach is applicable  
in \emph{all} models of computation, if one requires simplicial maps
to be continuous.
Unlike the case of round-structured compact and sub-IIS models,
proving the applicability of the topology approach to general
non-compact models is not straightforward.
It requires to combine \emph{point-set topological} techniques~\cite{AS84}
with \emph{combinatorial topology} techniques~\cite{HKR-book}.

We use this result in our fourth contribution: 
a proof that the approach described at the 
beginning of the introduction, 
equating protocols and simplicial maps that are not required to be 
continuous, is universal for compact models 
(Theorem~\ref{thm:projective:limit:compact}). 
Namely, in every compact model, possibly non-round-structured,
it is indeed the case that there is correspondence between
protocols and simplicial maps, hence the approach works in all 
these cases.
The proof of this result is far from trivial, 
and it uses \emph{projective limits} 
from \emph{category theory}~\cite{MacLane78}.

As far as we know, non-compact models have been directly studied
only in~\cite{NSW19:podc,GP20,FevatG2011minimal,GKM14,CoutoulyG2023}.
A full combinatorial solvability characterization for two-process consensus 
under synchronous general message-loss failures appears 
in~\cite{FevatG2011minimal}. 
For the case of two processes, these models are all sub-IIS, 
hence this work is the first that directly studies non-compact models. 
Then, \cite{GKM14} attempted to generalize ACT to
general sub-IIS models and tasks, for any number of processes.
The solvability of two-process consensus is studied again in~\cite{GP20}, 
now from a combinatorial topology perspective, 
where it is shown that the attempt in~\cite{GKM14} is flawed.
That paper also provides an alternative full topological solvability 
characterization for two-process consensus.
Recently, sub-IIS models were studied through \emph{geometrization}~\cite{CoutoulyG2023}, 
\ie, using a mapping from IIS executions to points in the Euclidean space, 
which in turn induces a topology. 
The geometrization is used to derive a full solvability characterization 
for \emph{set agreement} in sub-IIS models, and it generalizes the two-process
consensus solvability characterization of~\cite{GP20}.
A solvability characterization for consensus (only) in general models, 
for any number of processes, is presented in~\cite{NSW19:podc}.
It is derived using point-set topology techniques from~\cite{AS84}, 
without combining them with combinatorial topology.
Recent formalizations~\cite{AAEGZ19, ACR23} for proofs based on valency 
arguments show that for some tasks, \eg, set agreement and renaming,
impossibility cannot be shown by inductively constructing infinite executions. 
This means that arguments regarding the final protocol states
are necessary in order to prove impossibility.  
Our results indicate that such proofs can be carried within 
combinatorial topology, in general models of computation.

In summary, our contributions are:
\begin{enumerate}
    \item A generalized ACT for arbitrary sub-IIS models 
    (Theorem~\ref{thm:GACT-fixed}).

    \item An application of the generalized ACT to set agreement (Theorem~\ref{theo-set-agreement}).

    \item A proof that if simplicial maps from $\protocol$ to $\outputs$ are required to be continuous, 
    the topological approach works 
    for every model of computation (Theorem~\ref{thm:topology-approach}).

    \item A proof that the usual topological approach where simplicial maps
    are not required to be continuous works 
    for every compact model (Theorem~\ref{thm:projective:limit:compact}).

\end{enumerate}

\section{Preliminaries}

This section presents the elements of combinatorial topology and point set topology
used in further sections,
and defines tasks, system models and task solvability.

We start by fixing some basic notation.
We denote by~$\Pi$ the set of processes and let $n = |\Pi|$.
For any function $f:X\to Y$ and subsets $A\subseteq X$ and $B\subseteq Y$, 
we denote by~$f[A]$ the image of the set~$A$ under~$f$ and by~$f^{-1}[B]$ 
the inverse image of the set~$B$ under~$f$.

\subsection{Elements of Combinatorial Topology and Decision Tasks}

To be the most general possible, we use the language of 
colored tasks~\cite[Definition~8.2.1]{HKR-book}, to study one-shot distributed
decision tasks like consensus or set agreement.
We use the standard concepts in \cite{HKR-book} with the only difference that 
simplicial complexes might be infinite, 
\ie, a possibly infinite sets of finite sets.

\begin{toappendix}
A \emph{simplicial} complex 
is a (possibly infinite) set $V$ along with a  (possibly infinite) collection $\m{K}$  
of finite subsets of $V$ closed under containment
\ie, if $\sigma \in \m{K}$ then $\sigma' \in \m{K}$, for any $\sigma' \subseteq \sigma$.
An element of $V$ is called a \emph{vertex} of $\m{K}$, 
and the vertex set of $\m{K}$ is denoted by $V(\m{K})$.
Each set in $\m{K}$ is called a \emph{simplex}.
A subset of a simplex is called a \emph{face} of that simplex.
The \emph{dimension of a simplex} $\sigma$, denoted $\dim \sigma$, is
one less than the number of elements of $\sigma$, \ie, $|\sigma|-1$.
The \emph{dimension of a complex} is the smallest integer that upper bounds the
dimension of any of its simplexes, or $\infty$ if there is no such bound.
A simplex $\sigma$ in $\m{K}$ is called a \emph{facet} of $\m{K}$ 
if $\sigma$ is not properly contained in any other simplex.
A complex is \emph{pure} if all its facets have the same dimension.
We will focus on pure complexes, either finite or infinite.

Let $\m{K}$ be a complex and $\sigma$ be a simplex of it.
The \emph{star} of $\sigma$ in $\m{K}$ is the complex
$\sta \sigma = \{\tau \in \m{K} \mid \sigma\subseteq\tau \}$.

Let $\m{K}$ and $\m{L}$ be complexes.
A \emph{vertex map} from $\m{K}$ to $\m{L}$
is a function $h:V(\m{K})\to V(\m{L})$.
If~$h$ also carries simplexes of $\m{K}$ to simplexes of $\m{L}$,
it is called a \emph{simplicial map}.

For two complexes $\m{K}$ and $\m{L}$, if $\m{K} \subseteq \m{L}$, 
we say $\m{K}$ is a \emph{subcomplex} of $\m{L}$.
Given two complexes $\m{K}$ and $\m{L}$, a \emph{carrier map} 
$\Phi: \m{K} \to 2^{\m{L}}$
maps each simplex $\sigma \in \m{K}$ to a subcomplex $\Phi(\sigma)$ of $\m{L}$, 
such that for every two simplexes $\tau$ and $\tau'$ in $\m{K}$ that satisfy $\tau \subseteq \tau'$, we have $\Phi(\tau) \subseteq \Phi(\tau')$. 
We say that $\Phi$ is \emph{rigid} if
for every $\sigma \in \m{K}$,
$\Phi(\sigma)$ is pure of dimension $\dim \sigma$.

A \emph{geometric realization} of a complex \m{K} is an embedding of the simplexes
of \m{K} into a real vector space such that, roughly speaking, 
intersections of simplexes are respected.
All geometric realizations
of a complex are topologically equivalent, \ie, homeomorphic.
Thus, we speak of \emph{the} geometric realization
of \m{K}, which is denoted $|\m{K}|$.
The standard construction sets $\lvert \m{K}\rvert$ equal to the set of functions $\alpha:V(\m{K})\to[0,1]$ such that $\{ v \in V(\m{K}) \mid \alpha(v) > 0 \}$ is a simplex of~$\m{K}$ and $\lVert \alpha \rVert_1 = \sum_{v\in V(\m{K})} \alpha(v) = 1$.
The $1$-norm induces a metric on~$\lvert \m{K}\rvert$ that makes its diameter equal to~$1$ if~$\m{K}$ has more than one vertex.
Any simplicial map $h:\m{K}\to \m{L}$ induces a function
$|h|:|\m{K}|\to |\m{L}|$. If the complexes are finite, 
then $|h|$ is necessarily continuous, and there is no guarantee of that otherwise.

A \emph{coloring} of a complex $\m{K}$ is a function $\chi: V(\m{K}) \rightarrow \Pi$.
The coloring is \emph{chromatic} if any two distinct vertices of the
same facet of $\m{K}$ have distinct colors.
A \emph{chromatic} complex is a simplicial complex
equipped with a chromatic coloring.
A \emph{labeling} of a complex $\m{K}$ is a function
$\ell: V(\m{K}) \rightarrow L$, where $L$ is a set.
The set $L$ will be a set of inputs, outputs or process states.
Below, we will consider chromatic and labeled complexes such that
each vertex is uniquely identified by its color together with its label,
namely, for any two distinct vertices $u$ and $v$, 
$(\chi(u),\ell(u)) \neq (\chi(v),\ell(v))$.
For any vertex $v$ of any such complex, 
we let denote by $v(p,x)$ the unique vertex of the complex 
with color~$p \in \Pi$ and label~$x \in L$.

Let \m{K} be a chromatic complex. 
The \emph{standard chromatic subdivision} of \m{K}, denoted $\Chr \m{K}$,
is the chromatic complex whose vertices have the form $(p, \sigma)$,
where $p \in \Pi$, $\sigma$ is a face of a facet of \m{K}
and $p \in \chi(\sigma)$. A set 
$\{ (p_0, \sigma_0), (p_1, \sigma_1), \hdots, (p_s, \sigma_s) \}$
is a simplex of $\Chr \m{K}$ if and only if 
$\sigma_0 \subseteq \sigma_1 \subseteq \hdots \subseteq \sigma_s$
and for all $0 \leq q,r \leq s$, 
if $q \in \chi(\sigma_r)$ then $\sigma_q \subseteq \sigma_r$.
The chromatic coloring $\chi$ for $\Chr \m{K}$ is defined as 
$\chi(p,\sigma) = p$. 
Figure~\ref{fig:scs} contains the standard chromatic subdivision
of an edge, a 1-dimensional simplex, and a triangle, a 2-dimensional simplex.
The $k$-th standard chromatic subdivision,
$\Chr^k \m{K}$,
is obtained by iterating $k$ times the standard chromatic subdivision.
The standard chromatic subdivision is indeed a subdivision: 
$|\Chr^k \m{K}| \cong |\m{K}|$, for every $k \geq 0$.

\begin{figure}[tb]
\begin{center}
\includegraphics[scale=0.55]{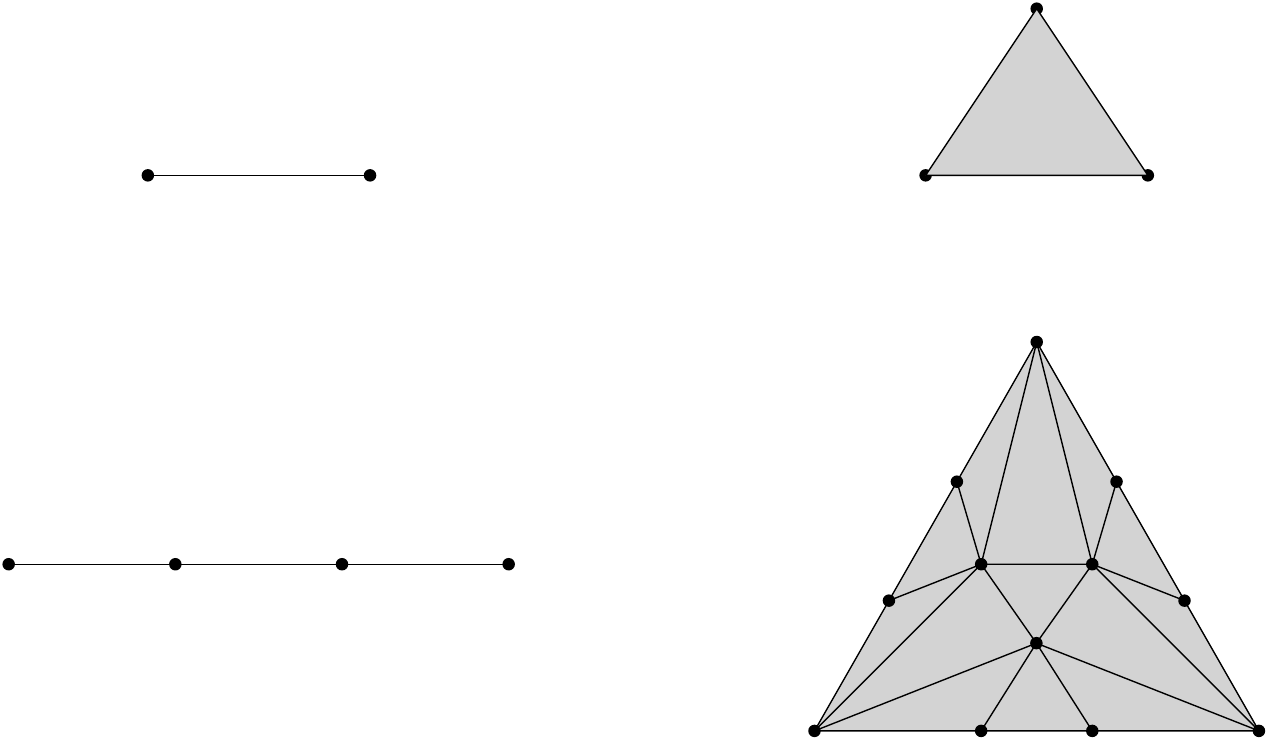}
\caption{The standard chromatic subdivision of an edge and of a triangle.}
\label{fig:scs}
\end{center}
\end{figure}

A simplicial map $h:V(\m{K})\to V(\m{L})$ is \emph{chromatic} 
if it carries colors, \ie, $\chi(v) = \chi(h(v))$,
for every vertex~$v$ of $\m{K}$.
A carrier map $\Phi: \m{K} \to 2^{\m{L}}$ is \emph{chromatic}
if $\Phi(\sigma)$ is pure and chromatic of dimension
$\dim \sigma$, and each facet of it has colors $\chi(\sigma)$.

\end{toappendix}

A \emph{decision task} is a triple $\task = (\inputs, \outputs, \Delta)$ such that:
\begin{itemize}
\item $\inputs$, the \emph{input complex}, is a \emph{finite} pure chromatic simplical complex of dimension $n-1$, 
whose vertices are additionally labeled by a set of inputs~$\Vin$. 
Each simplex of $\inputs$ specifies private inputs for the processes that appear in the simplex.
\item $\outputs$, the \emph{output complex}, is a \emph{finite} pure chromatic simplical complex of dimension $n-1$, 
whose vertices are additionally labeled by a set of inputs~$\Vout$.
As above, each simplex of $\outputs$ specifies private outputs for the processes in the simplex. 
\item $\Delta$ is a chromatic carrier map from~$\inputs$ to~$\outputs$, $\Delta(\sigma)$,
that \emph{specifies} the valid outputs for every input simplex $\sigma$ in $\inputs$.
Namely, when the inputs are the ones specified in $\sigma$, the outputs
in any simplex of $\Delta(\sigma)$ are allowed.
\end{itemize}

\subsection{Elements of Point-Set Topology}

In addition to combinatorial topology, 
we employ point-set topology~\cite{bourbaki:general:topology:1:4}, 
\ie, the general mathematical theory of closeness, convergence, and continuity.
The topologies that we define here are described by \emph{metrics}, 
which are distance functions $d:X\times X\to [0,\infty)$ that satisfy:
\begin{enumerate}
\item Positive definiteness: $d(x,y)=0$ if and only if $x=y$ 
\item Symmetry: $d(x,y) = d(y,x)$
\item Triangle inequality: $d(x,z) \leq d(x,y) + d(y,z)$
\end{enumerate}
A set equipped with a metric is called a \emph{metric space}.
The most basic metric is the \emph{discrete metric},
which is defined by:
\begin{equation}
d(x,y) =
\begin{cases}
0 & \text{if } x=y\\
1 & \text{if } x\neq y
\end{cases}
\end{equation}
That is, the discrete metric can only give the information whether two elements are equal, but implies no finer-grained notion of closeness.

A central notion in point-set topology are \emph{open sets}, which are subsets $O\subseteq X$ such that
\begin{equation}
\forall x\in O \quad \exists \varepsilon > 0 \colon \quad B_\varepsilon(x) \subseteq O
\end{equation}
where
$B_\varepsilon(x) = \{ y\in X \mid d(x,y) < \varepsilon \}$
is the open ball with radius~$\varepsilon$ around~$x$.
With respect to the discrete metric, every subset $O\subseteq X$ is open.
This follows from the fact that the open ball with radius~$1/2$ around~$x$ is equal to $B_{1/2}(x)=\{x\}$, \ie, only contains~$x$ itself.

The general definition of a \emph{topological space} is a nonempty set~$X$ together with a \emph{topology}, \ie, a set $\m{O}\subseteq 2^X$ of subsets of~$X$ that is closed under arbitrary unions and finite intersections.
The elements of~$\m{O}$ are called the open sets of the space.
With the above definition, every metric induces a topology.

A particular class of metrics that we use in this paper is that of \emph{ultrametrics}.
They satisfy the stronger ultrametric triangle inequality: $d(x,z) \leq \max\{ d(x,y) , d(y,z) \}$ for all $x,y,z\in X$.
The discrete metric is an example of an ultrametric.
In an ultrametric space, two open balls are either disjoint or one is a subset of the other, as is shown by the following folklore lemma:

\begin{lemrep}\label{lem:ultrametric:balls}
Let~$X$ be an ultrametric space.
For all $x,y\in X$ and all $\delta,\varepsilon > 0$, one of the following is true: 
(1) $B_\delta(x) \cap B_\varepsilon(y) = \emptyset$,
(2) $B_\delta(x) \subseteq B_\varepsilon(y)$,
(3) $B_\varepsilon(y) \subseteq B_\delta(x)$.
\end{lemrep}
\begin{proof}
Assume that both~(1) and~(2) are false.
We will prove that then~(3) is true.

Let $v\in B_\varepsilon(y)$.
We need to show that $v\in B_\delta(x)$.
Since~(1) is false, there exists a $z\in B_\delta(x) \cap B_\varepsilon(y)$.
Applying the ultrametric triangle inequality twice, we have:
\begin{equation}
\begin{split}
d(v,x)
& \leq
\max\{ d(v,z) , d(z,x) \}
\leq
\max\{ d(v,y) , d(y,z) , d(z,x) \}
\\ & <
\max\{ \varepsilon , \varepsilon , \delta \}
=
\max\{ \varepsilon , \delta \}
\end{split}
\end{equation}
It remains to prove that $\varepsilon\leq \delta$ so that $\max\{\varepsilon,\delta\} = \delta$ and $v\in B_\delta(x)$.

Suppose by contradiction that $\varepsilon > \delta$.
Since~(2) is false, there exists a $u\in B_\delta(x) \setminus B_\varepsilon(y)$.
But then we have
\begin{equation}
\begin{split}
d(u,y)
& \leq
\max\{ d(u,z) , d(z,y) \}
\leq
\max\{ d(u,x) , d(x,z) , d(z,y) \}
\\ & <
\max\{ \delta , \delta , \varepsilon \}
=
\varepsilon
\enspace,
\end{split}
\end{equation}
which means that $u\in B_\varepsilon(y)$, a contradiction to the choice of~$u$.
\end{proof}

The \emph{morphisms} of topological spaces $X$ and $Y$ are \emph{continuous} functions, 
namely, those functions $f:X\to Y$ such that any inverse image of an open set is open.
In metric terms, this means that for every $x\in X$ and every $\varepsilon > 0$ there exists a $\delta > 0$ such that $d_X(x,x') < \delta$ implies $d_Y(f(x),f(x')) < \varepsilon$ for all $x'\in X$.
Here, we denoted by~$d_X$ the metric on~$X$ and by~$d_Y$ the metric on~$Y$.
All constant functions are continuous, as are all \emph{locally constant} functions, 
\ie, functions $f:X\to Y$ that are constant in some open ball $B_\varepsilon(x)$ 
with positive radius $\varepsilon>0$ for every $x\in X$.

Topologies for standard set-theoretic constructions can be defined from their individual parts.
For instance, the product topology of a countable collection of metric spaces~$X_i$ can be described by the metric $d:X\times X\to[0,\infty)$ with 
\begin{equation}
d(x,y)
=
\sum_{i\in \mathbb{N}} 2^{-i} \frac{d_i(x_i, y_i)}{1 + d_i(x_i,y_i)}
\enspace.
\end{equation}

We use the product metric to extend the notion of indistinguishability of local views of configurations
(these concepts are formally defined in Section~\ref{sec:ll-model}) to a metric on infinite executions.
It has the following property:

\begin{lem}[{\cite[\S~2.3, Proposition~4]{bourbaki:general:topology:1:4}}]\label{lem:prod:top}
Let $(X_i)_{i\in \mathbb{N}}$ be a countable collection of metric spaces and let $X = \prod_{i\in \mathbb{N}} X_i$ be their product equipped with the product metric.
For all metric spaces~$Y$ and all functions $g:Y\to X$, the following are equivalent:
\begin{enumerate}
\item The function~$g$ is continuous.
\item The function $\pi_i\circ g$ is continuous for all $i\in \mathbb{N}$ where $\pi_i:X\to X_i$ is the projection on the component~$i$.
\end{enumerate}
\end{lem}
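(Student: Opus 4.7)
The plan is to prove each direction separately, using the explicit form of the product metric.

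For the direction (1) $\Rightarrow$ (2), I would first verify that each projection $\pi_i : X \to X_i$ is itself continuous. From the definition of the product metric, each term is nonnegative, so $2^{-i} d_i(x_i,y_i)/(1+d_i(x_i,y_i)) \leq d(x,y)$. Since the auxiliary function $t\mapsto t/(1+t)$ is a continuous strictly increasing bijection $[0,\infty)\to[0,1)$ whose inverse is also continuous, the metric $d_i$ and the bounded metric $d_i/(1+d_i)$ generate the same topology on $X_i$, and $\pi_i$ is continuous with respect to these topologies. Therefore $\pi_i\circ g$ is continuous as a composition of continuous maps.

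For the direction (2) $\Rightarrow$ (1), I would argue directly from the $\varepsilon$-$\delta$ definition. Fix $y\in Y$ and $\varepsilon>0$. The key observation is that the product metric is dominated by a geometric tail: since $d_i/(1+d_i) < 1$ for every $i$, we can choose $N\in\mathbb{N}$ large enough that $\sum_{i>N} 2^{-i} < \varepsilon/2$, which bounds the contribution of all coordinates beyond~$N$ uniformly. For each of the finitely many coordinates $i = 0,1,\dots,N$, the assumed continuity of $\pi_i\circ g$ gives some $\delta_i>0$ such that $d_Y(y,y')<\delta_i$ implies $d_i(g(y)_i,g(y')_i) < \varepsilon/(2(N+1))$, and hence $2^{-i} d_i/(1+d_i)$ at coordinate~$i$ is below $\varepsilon/(2(N+1))$. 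Taking $\delta = \min_{i\leq N}\delta_i$ and summing the two bounds yields $d_X(g(y),g(y'))<\varepsilon$, proving continuity of $g$ at the arbitrary point~$y$.

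The only step that requires some care is the tail estimate: one must use both the geometric weights $2^{-i}$ and the boundedness of the transformed coordinate metrics to truncate an a priori infinite sum after finitely many terms. This is precisely where the specific choice of product metric is used; any equivalent definition of the product topology would need a matching estimate. Beyond this point, the argument is a standard finite-sum $\varepsilon$-$\delta$ manipulation, and no genuine difficulty arises, consistent with the result being a classical one cited from Bourbaki.
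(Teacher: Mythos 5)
Your proof is correct. Note that the paper does not prove this lemma at all: it is imported verbatim from Bourbaki (\S~2.3, Proposition~4) as a classical fact, so there is no in-paper argument to compare against. Your two-step argument is the standard one: for (1)$\Rightarrow$(2) the bound $2^{-i}\,d_i(x_i,y_i)/(1+d_i(x_i,y_i)) \leq d(x,y)$ together with the fact that $t\mapsto t/(1+t)$ is a homeomorphism of $[0,\infty)$ onto $[0,1)$ gives continuity of each $\pi_i$; for (2)$\Rightarrow$(1) the geometric tail truncation plus a finite minimum of the $\delta_i$ is exactly the estimate that makes the product metric work, and your constants ($\varepsilon/2$ for the tail, $\varepsilon/(2(N+1))$ per retained coordinate) close the argument cleanly.
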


The disjoint-union topology of the disjoint union $X = \bigsqcup_{i\in I} X_i$ is described by the metric $d:X\times X\to[0,\infty)$ with $d(x,y) = d_i(x,y)$ if there is an index $i\in I$ such that both~$x$ and~$y$ are elements of~$X_i$, and $d(x,y) = 2$ else.
We use the disjoint-union metric to get a global metric from those defined for the local views of each process,
with the following property:

\begin{lem}[{\cite[\S~2.4, Proposition~6]{bourbaki:general:topology:1:4}}]\label{lem:union:top}
Let $(X_i)_{i\in I}$ be a collection of metric spaces and let $X = \bigsqcup_{i\in I} X_i$ be their disjoint union equipped with the disjoint-union metric.
For all metric spaces~$Y$ and all functions $g:X\to Y$, the following are equivalent:
\begin{enumerate}
\item The function~$g$ is continuous.
\item The function $g\circ\varphi_i$ is continuous for all $i\in I$ where $\varphi_i:X_i\to X$ is the embedding of~$X_i$ into~$X$.
\end{enumerate}
\end{lem}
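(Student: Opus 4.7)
The plan is to prove the two implications of the equivalence separately using direct metric arguments, exploiting the key structural property that in the disjoint-union metric every ball of radius at most~$1$ sits entirely inside a single component.

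For the implication $(1)\Rightarrow(2)$, the main observation is that each embedding $\varphi_i:X_i\to X$ is itself continuous, so that $g\circ\varphi_i$ is continuous as a composition of continuous functions. Continuity of $\varphi_i$ is immediate because $\varphi_i$ is in fact an isometry onto its image: for $x,x'\in X_i$, one has $d(\varphi_i(x),\varphi_i(x')) = d_i(x,x')$ by definition of the disjoint-union metric, so the standard $\varepsilon$--$\delta$ criterion is satisfied with $\delta = \varepsilon$.

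For the harder implication $(2)\Rightarrow(1)$, I would fix an arbitrary $x\in X$ and an $\varepsilon > 0$, and show continuity of $g$ at $x$. Since the $X_i$ are disjoint, there is a unique index $i\in I$ with $x\in X_i$. Continuity of $g\circ\varphi_i$ at $x$ provides some $\delta_0 > 0$ such that $d_i(x,x') < \delta_0$ implies $d_Y(g(x),g(x')) < \varepsilon$ for all $x'\in X_i$. I would then set $\delta = \min\{\delta_0,\,1\}$. Any $y\in X$ with $d(x,y) < \delta$ must lie in $X_i$: if instead $y\in X_j$ with $j\neq i$, the disjoint-union metric would give $d(x,y) = 2 > \delta$, a contradiction. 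Hence $d(x,y) = d_i(x,y) < \delta \le \delta_0$, so $d_Y(g(x),g(y)) < \varepsilon$, proving continuity of $g$ at~$x$.

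The only subtle point, which I would expect to be the main conceptual step, is the choice $\delta \le 1$ (or any value strictly less than~$2$) in the second direction: without capping $\delta$, the ball $B_\delta(x)$ could in principle reach into other components $X_j$ where the metric on $X_j$ has no relation to the distance to~$x$, and we would have no control on $d_Y(g(x),g(y))$ for such points. This is the same phenomenon that makes the disjoint-union topology a true coproduct in the category of topological spaces: open sets are precisely unions of open sets from each component, so continuity is a purely local property with respect to the partition. The proof is otherwise a routine $\varepsilon$--$\delta$ verification and does not require any appeal to countability of the index set~$I$.
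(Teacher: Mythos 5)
Your proof is correct. The paper does not actually prove this lemma---it is quoted directly from Bourbaki (\S~2.4, Proposition~6)---so there is no in-paper argument to compare against, but your two implications are exactly the standard verification: the embeddings are isometries onto their images, and in the reverse direction the crucial step is precisely the one you identify, namely capping $\delta$ strictly below $2$ so that the ball $B_\delta(x)$ cannot leave the component $X_i$ containing $x$; after that the claim reduces to continuity of $g\circ\varphi_i$ at $x$. One peripheral remark, not a flaw in your argument: the ``disjoint-union metric'' as defined in the paper satisfies the triangle inequality only when the component metrics are bounded (here by $4$, say), which holds in all of the paper's applications since each $d_p$ takes values in $[0,1]$; your proof uses only the defining property that points in distinct components are at distance $2$, so it goes through verbatim.
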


\subsection{System Model}
\label{sec:ll-model}

Let $\task = (\inputs, \outputs, \Delta)$ any task.
Since our goal is to give a very general characterization of task solvability, we work with an abstract system model that hides most of the operational details, such as semantics of shared registers or guarantees of message delivery.
We instead focus on the structure of the set of executions induced by the local \emph{indistinguishability} relations, \ie, by the processes' local views.
We further assume that actions taken by processes do not influence the set of possible executions.
That is, we assume the existence of \emph{full-information} executions, on which we base our characterization.
A full-information execution is a sequence of configurations.
A configuration is a vector with the  process states and the state of the environment (\eg, shared memory, messages in transit) in its entries.
In a full-information execution, every process relays all the information it gathered to all other processes whenever it can.
This includes its input value, the order and contents of events it perceived, and the information relayed to it by others.
In particular, we assume that there are no size constraints on messages or shared memory.

Formally,
let~$\exec$ be the set of full-information executions of~$n$ processes in which initial configurations are chosen according to the input complex~$\inputs$.
We assume the existence of projection functions $\pi_p : \exec \to \view_p$ from executions to sequences of local views of process~$p$.
These sequences can be finite or infinite.
Its element with index~$t$ contains the local view of process~$p$ right after 
its $t$\textsuperscript{th} step in the execution.
The set of process views of executions in which process~$p$ is correct will be denoted by $\cview_p$.

A step is defined as a possibility to irrevocably decide.
That is, the $t$\textsuperscript{th} step of process~$p$ is process~$p$'s $t$\textsuperscript{th} possibility to decide a value (or not) in the execution.
We allow processes to decide in their initial state, \ie, in their step with index $t=0$.
Step counts are local to a process and need not be synchronized among processes.
A process that only has finitely many steps is called \emph{faulty} in the execution.
For an execution $E\in\exec$ we write $\correct(E)\subseteq \Pi$ for the set of correct (non-faulty) processes in the execution.
A \emph{participating} process $p\in\participating(E)\subseteq \Pi$ is one that takes at least one step.
We have that $\correct(E)\subseteq\participating(E)$.
We write $\init_p(E)\in V^{\mathrm{in}}$ for the initial value of process~$p$ in execution~$E$.
The concrete forms of executions and local views depend on the specifics of the computational model.
\figref{fig:execs} depicts an example execution and process-view sequence.

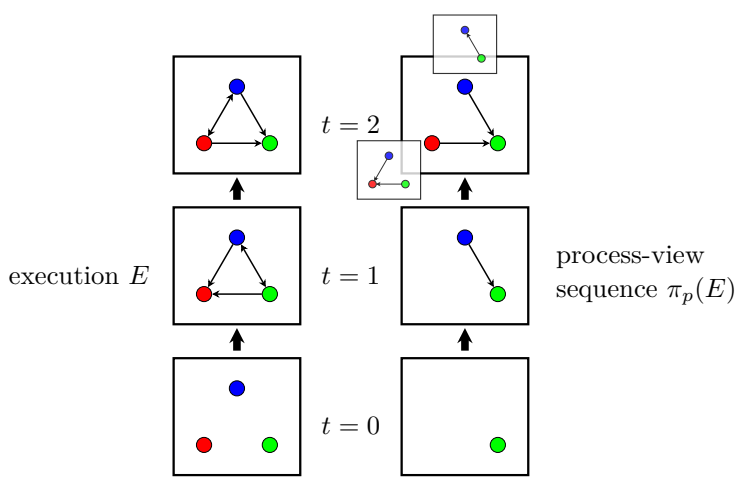
\begin{figure}
\centering
\begin{tikzpicture}[
    state/.style={circle, draw=black, inner sep=0pt, minimum size=5pt, fill=gray}
]
\begin{scope}[
    xshift=0mm, yshift=0mm,
    scale=1, transform shape,
    every node/.append style={opacity=1.0, scale=1, transform shape},
    opacity=1.0
]
\draw[line width=0.3mm,fill=white] (-0.8330127018922193,-0.65) rectangle (0.8330127018922193,0.9);
\node[state,fill=red, line width=0.1mm, minimum size=2.0mm] (t0g1) at (-0.4330127018922193,-0.25) {};
\node[state,fill=blue, line width=0.1mm, minimum size=2.0mm] (t0g2) at (0.0,0.5) {};
\node[state,fill=green, line width=0.1mm, minimum size=2.0mm] (t0g3) at (0.4330127018922193,-0.25) {};
\end{scope}
\draw[-{Stealth[length=2mm, width=2mm]}, line width=1mm] (0mm, 10.0mm) -- (0mm, 13.0mm);
\begin{scope}[
    xshift=0mm, yshift=20mm,
    scale=1, transform shape,
    every node/.append style={opacity=1.0, scale=1, transform shape},
    opacity=1.0
]
\draw[line width=0.3mm,fill=white] (-0.8330127018922193,-0.65) rectangle (0.8330127018922193,0.9);
\node[state,fill=red, line width=0.1mm, minimum size=2.0mm] (t1g1) at (-0.4330127018922193,-0.25) {};
\node[state,fill=blue, line width=0.1mm, minimum size=2.0mm] (t1g2) at (0.0,0.5) {};
\node[state,fill=green, line width=0.1mm, minimum size=2.0mm] (t1g3) at (0.4330127018922193,-0.25) {};
\draw[-{Stealth[length=1.0mm, width=1.0mm]}, line width=0.2mm] (t1g2) -- (t1g1);
\draw[-{Stealth[length=1.0mm, width=1.0mm]}, line width=0.2mm] (t1g3) -- (t1g1);
\draw[{Stealth[length=1.0mm, width=1.0mm]}-{Stealth[length=1.0mm, width=1.0mm]}, line width=0.2mm] (t1g3) -- (t1g2);
\end{scope}
\draw[-{Stealth[length=2mm, width=2mm]}, line width=1mm] (0mm, 30.0mm) -- (0mm, 33.0mm);
\begin{scope}[
    xshift=0mm, yshift=40mm,
    scale=1, transform shape,
    every node/.append style={opacity=1.0, scale=1, transform shape},
    opacity=1.0
]
\draw[line width=0.3mm,fill=white] (-0.8330127018922193,-0.65) rectangle (0.8330127018922193,0.9);
\node[state,fill=red, line width=0.1mm, minimum size=2.0mm] (t2g1) at (-0.4330127018922193,-0.25) {};
\node[state,fill=blue, line width=0.1mm, minimum size=2.0mm] (t2g2) at (0.0,0.5) {};
\node[state,fill=green, line width=0.1mm, minimum size=2.0mm] (t2g3) at (0.4330127018922193,-0.25) {};
\draw[{Stealth[length=1.0mm, width=1.0mm]}-{Stealth[length=1.0mm, width=1.0mm]}, line width=0.2mm] (t2g2) -- (t2g1);
\draw[{Stealth[length=1.0mm, width=1.0mm]}-, line width=0.2mm] (t2g3) -- (t2g1);
\draw[{Stealth[length=1.0mm, width=1.0mm]}-, line width=0.2mm] (t2g3) -- (t2g2);
\end{scope}
\begin{scope}[
    xshift=30mm, yshift=0mm,
    scale=1, transform shape,
    every node/.append style={opacity=1.0, scale=1, transform shape},
    opacity=1.0
]
\draw[line width=0.3mm,fill=white] (-0.8330127018922193,-0.65) rectangle (0.8330127018922193,0.9);
\node[state,fill=green, line width=0.1mm, minimum size=2.0mm] (t0p3) at (0.4330127018922193,-0.25) {};
\end{scope}
\draw[-{Stealth[length=2mm, width=2mm]}, line width=1mm] (30mm, 10.0mm) -- (30mm, 13.0mm);
\begin{scope}[
    xshift=30mm, yshift=20mm,
    scale=1, transform shape,
    every node/.append style={opacity=1.0, scale=1, transform shape},
    opacity=1.0
]
\draw[line width=0.3mm,fill=white] (-0.8330127018922193,-0.65) rectangle (0.8330127018922193,0.9);
\node[state,fill=blue, line width=0.1mm, minimum size=2.0mm] (t1p2) at (0.0,0.5) {};
\node[state,fill=green, line width=0.1mm, minimum size=2.0mm] (t1p3) at (0.4330127018922193,-0.25) {};
\draw[{Stealth[length=1.0mm, width=1.0mm]}-, line width=0.2mm] (t1p3) -- (t1p2);
\end{scope}
\draw[-{Stealth[length=2mm, width=2mm]}, line width=1mm] (30mm, 30.0mm) -- (30mm, 33.0mm);
\begin{scope}[
    xshift=30mm, yshift=40mm,
    scale=1, transform shape,
    every node/.append style={opacity=1.0, scale=1, transform shape},
    opacity=1.0
]
\draw[line width=0.3mm,fill=white] (-0.8330127018922193,-0.65) rectangle (0.8330127018922193,0.9);
\node[state,fill=red, line width=0.1mm, minimum size=2.0mm] (t2p1) at (-0.4330127018922193,-0.25) {};
\node[state,fill=blue, line width=0.1mm, minimum size=2.0mm] (t2p2) at (0.0,0.5) {};
\node[state,fill=green, line width=0.1mm, minimum size=2.0mm] (t2p3) at (0.4330127018922193,-0.25) {};
\draw[{Stealth[length=1.0mm, width=1.0mm]}-, line width=0.2mm] (t2p3) -- (t2p1);
\draw[{Stealth[length=1.0mm, width=1.0mm]}-, line width=0.2mm] (t2p3) -- (t2p2);
\end{scope}
\begin{scope}[
    xshift=20.0mm, yshift=33.333333333333336mm,
    scale=0.5, transform shape,
    every node/.append style={opacity=0.8, scale=0.5, transform shape},
    opacity=0.8
]
\draw[line width=0.15mm,fill=white] (-0.8330127018922193,-0.65) rectangle (0.8330127018922193,0.9);
\node[state,fill=red, line width=0.05mm, minimum size=4.0mm] (t2pq1) at (-0.4330127018922193,-0.25) {};
\node[state,fill=blue, line width=0.05mm, minimum size=4.0mm] (t2pq2) at (0.0,0.5) {};
\node[state,fill=green, line width=0.05mm, minimum size=4.0mm] (t2pq3) at (0.4330127018922193,-0.25) {};
\draw[-{Stealth[length=0.5mm, width=0.5mm]}, line width=0.1mm] (t2pq2) -- (t2pq1);
\draw[-{Stealth[length=0.5mm, width=0.5mm]}, line width=0.1mm] (t2pq3) -- (t2pq1);
\end{scope}
\begin{scope}[
    xshift=30mm, yshift=50.0mm,
    scale=0.5, transform shape,
    every node/.append style={opacity=0.8, scale=0.5, transform shape},
    opacity=0.8
]
\draw[line width=0.15mm,fill=white] (-0.8330127018922193,-0.65) rectangle (0.8330127018922193,0.9);
\node[state,fill=blue, line width=0.05mm, minimum size=4.0mm] (t2pr2) at (0.0,0.5) {};
\node[state,fill=green, line width=0.05mm, minimum size=4.0mm] (t2pr3) at (0.4330127018922193,-0.25) {};
\draw[-{Stealth[length=0.5mm, width=0.5mm]}, line width=0.1mm] (t2pr3) -- (t2pr2);
\end{scope}
\node at (15.0mm, 0mm) {$t=0$};
\node at (15.0mm, 20mm) {$t=1$};
\node at (15.0mm, 40mm) {$t=2$};
\node[text width=24.0mm] at (-18.0mm, 20mm) {execution~$E$};
\node[text width=24.0mm] at (54.0mm, 20mm) {process-view sequence $\pi_p(E)$};
\end{tikzpicture}
\caption{Prefix of a full-information execution~$E$ (left) and process-view projection~$\pi_p(E)$ (right) of a synchronous message-passing system with dynamic communication graphs. The depicted prefix includes the initial configuration as well as the first two communication rounds. Process~$p$ is the green (lower right) process. Initially, after round~$0$, process~$p$ only knows its own initial value. After the first round, process~$p$ also knows the blue (upper) process's initial value as well as the fact that directed edge from the blue to the green process was present in the communication graph of the first round. After the second round, process~$p$ learned the initial value of the red (lower left) process, its own incoming edges of the second round's communication graph, as well as the views of the blue and the red processes after the first round.}
\label{fig:execs}
\end{figure}

A (decision) \emph{protocol} is a function from local views to $V^{\mathrm{out}}\cup\{\perp\}$ with $\perp\not\in V^{\mathrm{out}}$ such that decisions are irrevocable:
if some view is mapped to a decision value $v\in V^{\mathrm{out}}$, then all its successor views are also mapped to~$v$.
A process~$p$ thus has at most one decision value in every execution~$E$, which we denote by $\decision_p(E) \in V^{\mathrm{out}}$.
A protocol \emph{solves} a task $T=(\inputs,\outputs,\Delta)$ if 
it satisfies the following two conditions in every execution $E\in\exec$:
\begin{itemize}
\item Every correct process $p\in\correct(E)$ has a decision value in~$E$.
\item We have $\{ v(p,\decision_p(E)) \mid p\in \correct(E) \} \in \Delta(\sigma)$ where $\sigma = \{ v(p,\init_p(E)) \mid p\in\participating(E) \}$.
\end{itemize}

\smallskip\noindent{\bf Example: Lossy-Link Model.}
The lossy-link model~\cite{SW89} is a synchronous computation model with $n=2$ processes, $p$ and $q$, that communicate via message passing.
The communication graph can change from round to round.
In each round, the adversary chooses one of three communication graphs:
$\leftarrow$, $\rightarrow$, or $\leftrightarrow$.
In a round with communication graph~$\leftarrow$, only the message from the right to the left processes arrives, the other message is lost.
In a round with communication graph~$\rightarrow$, only the message from the left to the right processes arrives, the other message is lost.
In a round with communication graph~$\leftrightarrow$, both messages arrive and no message is lost.
In a full-information execution, each process starts out by sending its initial value and then records all received messages in subsequent rounds, relaying this information to the other process.
In this model, there is no notion of faulty processes; the only source of uncertainty is the communication.
We thus have $\participating(E) = \correct(E) = \{p,q\} = \Pi$ for every full-information execution~$E$,
and thus $\cview_p = \view_p$ and $\cview_q = \view_q$.

Since both processes are correct in every execution, both processes are participating, \ie, $\participating(E) = \Pi$.

% \subsection{Example: Asynchronous Message Passing}

% In the asynchronous message-passing model~\cite[Chapter~2]{AW05}, an execution is a sequence of configurations, starting at the initial configuration and passing from one configuration to the next by applying an event.
% There are two types of events: compute events of a process~$p\in\Pi$ and delivery events of a previously sent message to its destination~$q\in\Pi$.
% Messages can be sent during a compute event by invoking the send operation.
% Every send operation must eventually have a corresponding delivery event in every execution.
% In a full-information execution, each process starts out by sending its initial value to every other process and then records all perceived events in subsequent rounds, relaying this information to all other processes.
% A process is correct in a full-information execution~$E$ if and only if it has infinitely many compute events:
% $\correct(E) = \{p\in\Pi \mid p\text{ has infinitely many compute events in }E\}$.
% A process~$p$ is participating, $p\in\participating(E)$, if there is a message chain from~$p$ to some correct process $q\in\correct(E)$ in the full-information execution~$E$.

\section{The Need of Continuity}
\label{sec:need-conitnuity}

This section explains the need of continuity of simplicial maps to model
protocols in non-compact models. This is done using 
in part the example in~\cite{GP20} showing 
a flaw in the attempt to generalize the ACT~\cite{GKM14}. 

For a system with two processes, left and right, 
the compact IIS model can be equivalently defined as the 
lossy-link model described in Section~\ref{sec:ll-model}. 
Thus, IIS for two processes
consists of all infinite sequences of communication graphs 
$\leftarrow$, $\rightarrow$, or $\leftrightarrow$, each graph
specifying the communication that occurs in a round. 
A sub-IIS model is any subset of IIS.

Let us consider an \emph{inputless} version of the consensus task 
where the left process has fixed input~$0$ and 
the right process has fixed input~$1$.
Then, the input complex of the task $\inputs$ is the complex made of 
the edge $\sigma = \{0, 1\}$ and its faces (processes are identified with their inputs),
the output complex~$\outputs$
has simplexes $\{0\}$ and $\{1\}$, and $\Delta$ maps $\sigma$ to $\outputs$,
and each $\{ i \}$ to itself. 
Complexes $\inputs$ and $\outputs$ will be denoted 
$\sigma$ and $\partial \sigma$, respectively.

The topology of the IIS executions is well understood:
the complex modeling all configurations at the end of round $R$ is a \emph{finite} subdivision of 
the input complex $\sigma$ (basically a subdivision of the real interval $[0,1]$), 
and as $R$ increases, the subdivision gets finer.
Concretely, it is the $R$-th standard chromatic subdivision.
Figure~\ref{fig:IIS}(left) shows the subdivisions for the first two rounds, 
where, for example, the left-most and right-most edges of the second subdivision 
correspond to 
the configurations at the end of the finite solo executions 
$\rightarrow, \rightarrow$ and $\leftarrow, \leftarrow$, respectively, 
where a process does not hear from the other,
and the central edge corresponds to $\leftrightarrow, \leftrightarrow$,
where processes hear from each other.

\begin{figure}[tb]
\begin{center}
\includegraphics[scale=0.65]{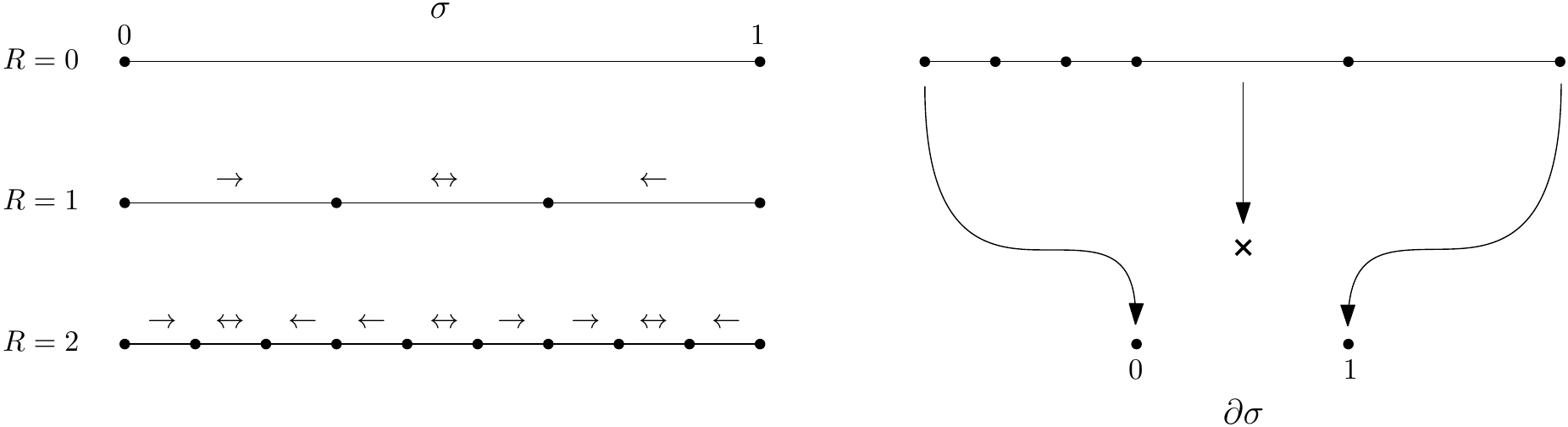}
\caption{Subdivisions in the IIS model for two processes.}
\label{fig:IIS}
\end{center}
\end{figure}

A key property of round-structured compact models like IIS is that, 
for any protocol solving a task,
there is a \emph{finite} round $R$ such that all correct processes make a decision
at round $R$, at the latest (assuming $\inputs$ is finite). 
With this property, it is simple to see that consensus is impossible in~IIS
(see the right side of Figure~\ref{fig:IIS}):
\begin{enumerate}
\item For any round $R$, the complex corresponding to the \emph{decided states} of 
a hypothetical protocol~$P$, is a finite subdivision $\m{K}$ of $\sigma$, 
\ie, $|\m{K}| = |\sigma|$. 
(Recall that $|\m{K}|$ is the geometric realization of $\m{K}$.)
The subdivision might be irregular because processes might make decisions at different 
rounds; processes keep running after decision, hence an edge models infinitely many infinite executions, 
all of them sharing the finite prefix where the decisions are made.

\item $P$ must map each vertex (state) of $\m{K}$ to an output in $\partial \sigma$,
with the restriction that the left-most vertex must be mapped to $0$ and the
right-most vertex must be mapped to~$1$, as they correspond to solo executions,
hence, by validity of the consensus task 
(\ie, $\Delta(\{i\}) = \{i\}$), 
the process that only sees its input is forced to decide it.

\item Since $P$ solves consensus, it induces a simplicial map
$\delta: \m{K} \rightarrow \partial \sigma$, which, as $\m{K}$ is finite, 
\emph{necessarily} induces a continuous map
$|\delta|: |\m{K}| \rightarrow |\partial \sigma|$. 
The map $|\delta|$ is ultimately a continuous map
$|\sigma| \rightarrow |\partial \sigma|$ that maps the boundary of $\sigma$ to itself.

\item Finally, this continuous map does not exist because $|\sigma|$ is solid whereas
$|\partial \sigma|$ is disconnected.
\end{enumerate}

The argument above goes from protocols to simplicial maps.
In models like IIS, the other direction is also true. 
Namely, for any given task $\task = (\inputs, \outputs, \Delta)$,
for any complex $\m{K}$ related to $\inputs$ that satisfies some model-dependent properties,
any simplicial map from $\m{K}$ to $\outputs$ 
that agrees with $\Delta$, induces a protocol for $\task$.
Thus, to show that a task is solvable in two-process IIS, 
it suffices to exhibit a finite, possibly irregular, subdivision of the input complex, 
in the style of the one in Figure~\ref{fig:IIS}(left),
and a simplicial map that is valid for the~task.

The main aim of~\cite{GKM14} is to generalize the approach above that equates
simplicial maps and protocols to arbitrary sub-IIS models, in order to exploit the
already known topology of IIS. The high-level idea is that the complexes that model
a sub-IIS model are still subdivisions but not necessarily of the input complex,
and not necessarily finite. 

Let us consider first the sub-IIS model $M_1$ with all infinite executions
of the form $\leftarrow$ followed by any infinite sequence with $\leftarrow$, $\rightarrow$, 
or $\leftrightarrow$ (intuitively right goes first),
or $\rightarrow$ followed by any infinite sequence with $\leftarrow$, $\rightarrow$, 
or $\leftrightarrow$ (intuitively left goes first).
It can be seen that consensus is solvable in this model: 
since $\leftrightarrow$ cannot happen in the
first round, the process that receives no message in the first round is the ``winner''.
Figure~\ref{fig:sub-IIS-1} shows an irregular subdivision that models all executions of $M_1$;
for example, the right-most edge corresponds to all executions of $M_1$ with prefix
$\leftarrow, \leftarrow$. Intuitively, in the subdivision,
in some executions processes decide in round one (represented by the edge at the left), 
and in the remaining executions processes decide in round two 
(represented by the three edges at the right).
Clearly, there is a simplicial map from such a disconnected subdivision 
to $\partial \sigma$ that agrees with consensus.
This simplicial map induces a consensus protocol for $M_1$.

\begin{figure}[tb]
\begin{center}
\includegraphics[scale=0.7]{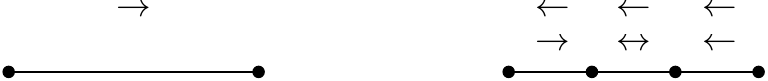}
\caption{A possible subdivision for the sub-IIS model $M_1$.}
\label{fig:sub-IIS-1}
\end{center}
\end{figure}

The argument above works well because the model is compact, hence finite subdivisions 
are able to capture all its executions. 
However, in non-compact models,
some executions can only be modeled through infinite subdivisions, 
which implies that simplicial maps are not necessarily protocols.

Consider now the sub-IIS model $M_2$ obtained by removing from IIS
the infinite execution~$E$ described by the sequence $\leftrightarrow,\leftarrow, \leftarrow, \hdots$
This model is not compact because it contains any infinite execution with a
\emph{finite} prefix (of any length) of~$E$, 
but it does not contain~$E$ itself, the limit execution.
As said, a crucial property of non-compactness is that the executions of the model
cannot be captured by a finite subdivision. Intuitively, an edge can only model 
executions that have a common finite prefix of~$E$ of length $x$, 
but in $M_2$ there are executions with
a prefix larger than $x$, 
hence these executions are not captured by the edge;
if the subdivision is finite, there are necessarily executions that are not modeled by any edge.

Figure~\ref{fig:sub-IIS-2} contains a schematized
infinite subdivision $\m{K}$ that indeed captures all executions of~$M_2$.
Intuitively, there are infinitely many edges that get closer and closer
to the point that represents the removed 
execution~$E$ 
(depicted as a vertical dashed line at the center),
but no edge actually ``crosses'' it (as $E$ is not in $M_2$).
Thus the simplicial complex \m{K} is
disconnected, and there is a simplical map from \m{K} to $\partial \sigma$ 
that agrees with consensus. Although all executions are captured
in the infinite subdivision, such a  simplicial map \emph{does not} imply a protocol. 
The intuition is that there is a sudden jump in the decisions around $E$,
which ultimately implies that the decision in executions that are similar enough to the removed 
limit execution~$E$ are not consistent, namely, they cannot be produced
by a protocol.

\begin{figure}[tb]
\begin{center}
\includegraphics[scale=0.65]{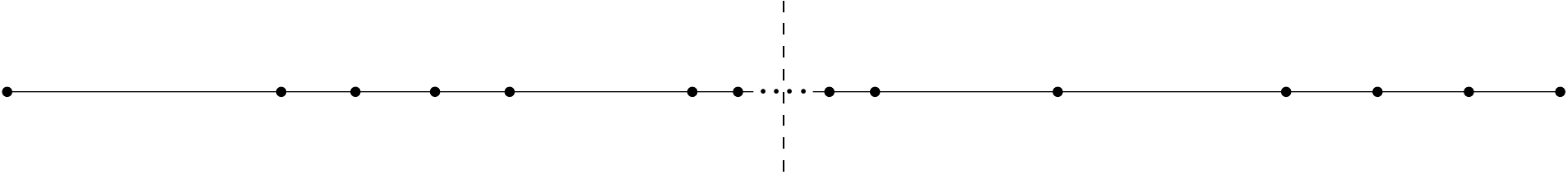}
\caption{An schematic representation of an infinite subdivision for the sub-IIS model $M_2$.}
\label{fig:sub-IIS-2}
\end{center}
\end{figure}

It turns out that the topological space $|\m{K}|$ is actually a subdivision of $|\sigma|$: 
in the limit, $|\m{K}| = |\sigma|$.
Thus, the infinite subdivision \m{K} describes a space that is not disconnected!
Moreover, for any infinite subdivision that models $M_2$, 
the space associated with it is connected, 
\ie, this is an invariant of the model of computation. 
Any simplicial map that intends to capture 
a protocol should consider that $|\m{K}| = |\sigma|$.
This is precisely captured by demanding that 
the induced map $|\m{K}| \rightarrow |\outputs|$ must be continuous (hence smooth around $E$).
Therefore, there is no continuous map
$|\sigma| \rightarrow |\partial \sigma|$ that maps the boundary of $\sigma$ to itself,
and indeed consensus is not solvable in this model~\cite[Theorem~III.8]{FevatG2011minimal}.

Formalizing this seemingly simple observation 
in arbitrary models of computation is not obvious, 
and it requires a combination of 
combinatorial topology techniques and point-set topology techniques, 
as is done in the following sections.
Intuitively, distance functions in point-set topology
are used to equip protocol complexes with a topology
that in turn yields a correspondence between
continuous simplicial maps and protocols.

\section{Proof of the Generalized Asynchronous Computability Theorem with an Application to Set Agreement}

In this section we use the definitions and notation of Gafni, Kuznetsov, and Manolescu~\cite{GKM14} for sub-IIS models.
They introduced the notion of \emph{terminating subdivisions} of the input complex~$\inputs$ of a task.
The idea is to repeatedly subdivide all simplexes via the standard chromatic subdivision, except those that are already marked as terminated.
The terminated simplexes model configurations where processes have decided.

Formally, a terminating subdivision~$\terminating$ 
is specified by a sequence of chromatic complexes 
$\m{I}_0, \m{I}_1, \hdots$ and a sequence of
subcomplexes $\Sigma_0 \subseteq \Sigma_1 \subseteq \hdots$
such that for all $k \geq 0$:
(1) $\Sigma_k$ is a subcomplex of $\m{I}_k$ 
(each $\m{I}_{k}$ is a \emph{non-uniform} subdivision~\cite{HS06})
and
(2) $\m{I}_0 = \m{I}$ and $\m{I}_{k+1}$ is obtained from $\m{I}_k$
by the \emph{partial} chromatic subdivision in which
the simplexes in $\Sigma_k$ are not further subdivided (the terminated simplexes), and each simplex $\tau \notin \Sigma_k$
is replaced with its standard chromatic subdivision $\Chr \tau$.
Precisely, we replace a simplex $\sigma$ in $\m{I}_k$ by a coarser 
subdivision than $\Chr \sigma$. Whereas the vertices of $\Chr \sigma$ are pairs 
$(p,\sigma')$ with $p \in \Pi$ and $\sigma' \subseteq \sigma$,
in $\m{I}_{k+1}$ we consider the pairs $(p,\sigma')$ of that form such that either $\sigma' \notin \Sigma_k$, or $\sigma'$ consists of a single vertex in $\Sigma_k$.

Figure~\ref{fig:term-sub} schematizes a terminating subdivision
where $\inputs$ is made of two triangles
and terminated simplexes are marked in red.

A simplex of $\Sigma_k$, for some $k$, is called \emph{stable}.
The simplicial complex~$K(\terminating)$ is the union of all~$\Sigma_k$;
$K(\terminating)$ might be infinite.

\begin{figure}[tb]
\begin{center}
\includegraphics[scale=0.45]{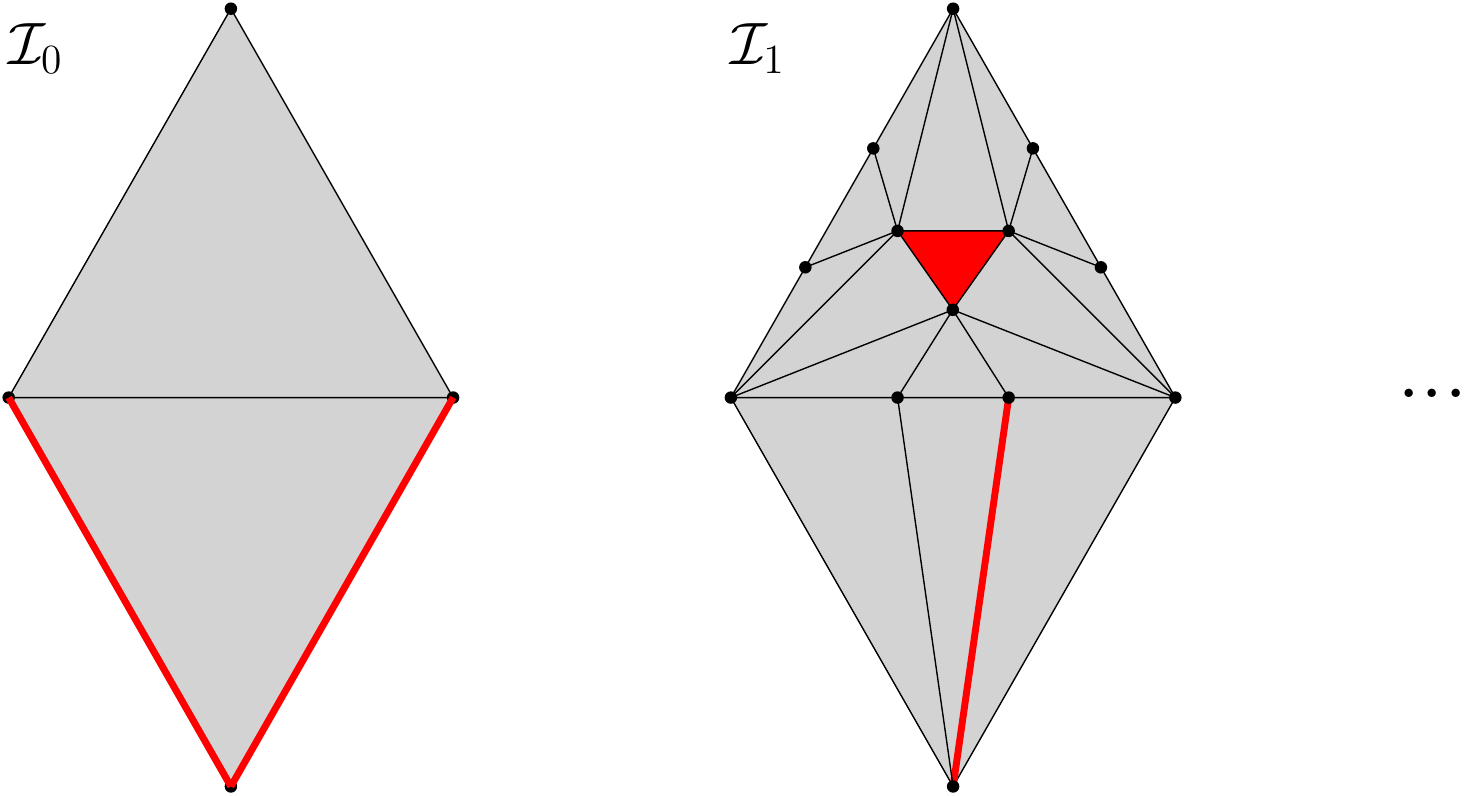}
\caption{First two complexes of 
a three-process terminating subdivision.}
\label{fig:term-sub}
\end{center}
\end{figure}

The vertices of~$K(\terminating)$ are naturally embedded in the geometric realization of~$\inputs$ by their definition as a vertex of the repeated chromatic subdivision~$\Chr^k \inputs$ 
(recall that $|\Chr^k \inputs| = |\inputs|$, for every $k \geq 0$).
In particular, we identify the geometric realization~$\lvert K(\terminating)\rvert$ with a subset of~$\lvert \inputs\rvert$.
Every IIS execution can be described as an infinite
sequence of simplexes $\sigma_0, \sigma_1, \hdots$
such that $\sigma_k \in \Chr^k \m{I}$, for every $k \geq 0$.

A terminating subdivision is \emph{admissible} for a sub-IIS model~$M$ if $K(\terminating)$ covers all executions of~$M$,
namely, for each execution $\sigma_0, \sigma_1, \hdots$ of $M$,
there is a $k$ such that $|\sigma_k| \subseteq |\tau|$,
for some terminated simplex $\tau \in \Sigma_k$.

\begin{thmrep}\label{thm:GACT-fixed}
A sub-IIS model~$M$ solves a task $\task = (\inputs, \outputs, \Delta)$ if and only if there exists a terminating subdivision~$\terminating$ of~$\inputs$ and a chromatic simplicial map $\delta \colon K(\terminating)\to\outputs$ such that:
\begin{enumerate}[(a)]
\item $\terminating$ is admissible for the model~$M$.
\item For any simplex~$\sigma$ of~$\inputs$, if~$\tau$ is a stable 
simplex of~$\terminating$ such that $\lvert \tau\rvert \subseteq \lvert \sigma\rvert$, then $\delta(\tau)\in\Delta(\sigma)$.
\item $\lvert \delta\rvert$ is continuous.
\end{enumerate}
\end{thmrep}
\begin{proof}
($\Rightarrow$):
We prove that the geometric realization of the map~$\delta$ as constructed in the proof of Gafni, Kuznetsov, and Manolescu \cite[Theorem~6.1]{GKM14} is continuous by generalizing the proof given by Godard and Perdereau~\cite[Theorem~33]{GP20} for the consensus task with two processes.

Let $x\in \lvert K(\terminating)\rvert$ and~$\varepsilon>0$.
We show the existence of an $\eta > 0$ such that:
\begin{equation}\label{eq:gact:delta:cont:def}
\forall y\in \lvert K(\terminating)\rvert
\colon\quad
d(x,y) < \eta
\implies
d\big( \lvert\delta\rvert(x) , \lvert\delta\rvert(y) \big) < \varepsilon
\end{equation}
Let~$\sigma$ be the minimal stable simplex in~$K(\terminating)$ such that $x\in\lvert\sigma\rvert$.
Since~$K(\terminating)$ is locally finite, the star $\sta \sigma = \{\tau\in K(\terminating) \mid \sigma\subseteq\tau \}$ is finite.
Let~$k$ be the smallest round number such that $\sta \sigma \subseteq \Sigma_k$.
Denote by~$D_k$ the diameter of the geometric realization of simplices in $\Chr^k \inputs$
and choose $\eta = \varepsilon D_k$.

We show~\eqref{eq:gact:delta:cont:def} in the geometric realization of every simplex $\tau\in\sta\sigma$.
By the choice of~$k$, we have $\tau\in \Chr^r \inputs$ for some $0\leq r\leq k$.
Let $y\in\lvert\tau\rvert$ and denote by~$\alpha$ the barycentric coordinates of~$x$ with respect to~$\tau$ and by~$\beta$ the barycentric coordinates of~$y$ with respect to~$\tau$, \ie,
$x = \sum_{v\in\tau} \alpha(v) \cdot v$ and $y = \sum_{v\in\tau} \beta(v) \cdot v$ with $\alpha,\beta\geq0$ and $\lVert\alpha\rVert_1 = \lVert\beta\rVert_1 = 1$.
Here, we identified each vertex~$v\in\tau$ with its position in the geometric realization~$\lvert K(\terminating)\rvert$.
We then have:
\begin{equation}
d(x,y)
=
\lVert x - y\rVert_1
=
\diam\lvert \tau \rvert \cdot \sum_{v\in\tau} \lvert \alpha(v) - \beta(v)\rvert
\geq
D_k \cdot \sum_{v\in\tau} \lvert \alpha(v) - \beta(v)\rvert
\end{equation}
By definition of the geometric realization~$\lvert\delta\rvert$, we have
\begin{equation}
\lvert\delta\rvert(x) = \sum_{v\in \tau} \alpha(v) \cdot \delta(v)
\end{equation}
where, again, we identify the vertex~$\delta(v)$ with its position in geometric realization~$\lvert\outputs\rvert$.
Since~$\delta(v)$ is a vertex of~$\outputs$ for every vertex $v\in\tau$, we have
\begin{equation}
\begin{split}
d\big( \lvert\delta\rvert(x) , \lvert\delta\rvert(y) \big)
\leq
\sum_{v\in\tau} \lvert \alpha(v) - \beta(v)\rvert
\leq
\frac{d(x,y)}{D_k}
<
\frac{\eta}{D_k}
=
\varepsilon
\enspace,
\end{split}
\end{equation}
which shows~\eqref{eq:gact:delta:cont:def} and concludes the proof of continuity of~$\lvert\delta\rvert$.

\medskip

($\Leftarrow$):
We modify the protocol that is constructed in the proof of Gafni, Kuznetsov, and Manolescu \cite[Theorem~6.1]{GKM14} for process~$p$ to decide in round~$k$ if the set
\begin{equation}
B_k(v) = \{ w \in V(K(\terminating)) \mid d(v,w) \leq D_k \wedge \chi(w) = p \}
\end{equation}
only contains vertices that are mapped to the same output vertex by~$\delta$, where~$v$ is the view of the process in round~$k$.
This condition eventually becomes true since the subset topology on the geometric realization of the output vertices~$V(\outputs)$ is discrete, and thus~$\delta$ is locally constant.
\end{proof}

\begin{proofsketch}
($\Rightarrow$):
This direction consisting in showing that the geometric realization of the map~$\delta$ as constructed in the proof of Gafni, Kuznetsov, and Manolescu \cite[Theorem~6.1]{GKM14} is continuous by generalizing the proof given by Godard and Perdereau~\cite[Theorem~33]{GP20} for the consensus task with two processes.

\medskip

($\Leftarrow$):
We modify the protocol that is constructed in the proof of Gafni, Kuznetsov, and Manolescu \cite[Theorem~6.1]{GKM14} for process~$p$ to decide in round~$k$ if the set
\begin{equation}
B_k(v) = \{ w \in V(K(\terminating)) \mid d(v,w) \leq D_k \wedge \chi(w) = p \}
\end{equation}
only contains vertices that are mapped to the same output vertex by~$\delta$, where~$v$ is the view of the process in round~$k$.
This condition eventually holds since the subset topology on the 
geometric realization of the output vertices~$V(\outputs)$ is discrete;
thus,~$\delta$ is locally constant.
\end{proofsketch}

We now use Theorem~\ref{thm:GACT-fixed} to derive a condition for
the impossibility of $(n-1)$-set agreement task in IIS-sub models. 
Recall that in this task each process is required to eventually decide an input value (termination) 
of a process participating in the execution (validity) such that no more 
than $n-1$ distinct values are decided (agreement). 

Let $\Pi = \{p_0, p_1, \hdots, p_{n-1}\}$.
For simplicity, we focus on the \emph{inputless} version of the set agreement task, 
where each process
$p_i$, $0 \leq i \leq n-1$, has fixed input $i$ in every execution,
and thus the task is the triple $\task = (\inputs, \outputs, \Delta)$,
where the input complex $\inputs$ is made of all faces of simplex $\sigma = \{0, 1, \hdots, n-1\}$,
and for simplicity it is denoted $\sigma$,
the output complex~$\outputs$, denoted $\partial \sigma$, is the complex with all \emph{proper} faces of $\sigma$,
and $\Delta$ maps every proper face $\sigma' \subset \sigma$ to
the complex with all faces of $\sigma'$, and maps $\sigma$ to~$\partial \sigma$.

\begin{thmrep}
\label{theo-set-agreement}
Let $M$ be an IIS-sub model such that for any termination subdivision 
$\terminating$ of~$\sigma$ that is admissible for $M$, 
$\lvert \sigma \rvert = \lvert K(\terminating) \rvert$. 
Then, $(n-1)$-set agreement is impossible in $M$.
\end{thmrep}
\begin{proof}
Let $M$ be a sub-IIS model. 
By Theorem~\ref{thm:GACT-fixed}, 
if $(n-1)$-set agreement is solvable in model~$M$,
there is a (possibly infinite) terminating subdivision $\terminating$ of $\sigma$ 
and a chromatic simplicial map $\delta: K(\terminating) \rightarrow \partial \sigma$
such that 
(1) $\terminating$ is admissible for $M$,
(2) for every input simplex~$\sigma' \subseteq \sigma$, if~$\tau$ is a stable simplex of~$\terminating$
 such that $\lvert \tau\rvert \subseteq \lvert \sigma'\rvert$, then $\delta(\tau)\in\Delta(\sigma')$,
and (3)~$\lvert \delta\rvert$ is continuous.

Let us suppose that $\lvert \sigma \rvert = \lvert K(\terminating) \rvert$, namely, 
$K(\terminating)$ subdivides $\sigma$.
Thus, for each face $\sigma' \subseteq \sigma$, $\lvert \sigma' \rvert = \lvert K(\sigma') \rvert$,
where $K(\sigma')$ denotes the terminating subdivision of $\sigma'$.
Consider the identity map $g: \lvert \sigma \rvert \rightarrow \lvert K(\terminating) \rvert$.
Clearly, $g$ is continuous, with $g(\lvert \sigma' \rvert) = \lvert K(\sigma') \rvert$.
Consider the function $f = \lvert \delta \rvert \circ g : \lvert \sigma \rvert \rightarrow \lvert \partial \sigma \rvert$.
Since $\lvert \delta \rvert$ and $g$ are continuous, the function $f$ is continuous too. 
We argue that $f(\lvert \sigma' \rvert)~\subseteq~\lvert \sigma' \rvert$,
for every proper face $\sigma' \subset \sigma$.
Consider any proper face $\sigma' \subset \sigma$.
We have that 
(a) $g(\lvert \sigma' \rvert) = \lvert K(\sigma') \rvert$, by definition of $g$,
(b) for any stable simple $\tau \in \terminating$ with  
$\lvert \tau\rvert \subseteq \lvert \sigma'\rvert = \lvert K(\sigma') \rvert$,
$\delta(\tau)\in\Delta(\sigma')$, by the properties of $\delta$, and
(c) $\Delta(\sigma') = \sigma'$, by definition of $\Delta$.
We thus conclude that $f(\lvert \sigma' \rvert)~\subseteq~\lvert \sigma' \rvert$.

The following lemma is direct consequence of Lemma 4.3.5 in~\cite{HKR-book}, 
and proves below the impossibility of $(n-1)$-set agreement whenever 
$\lvert K(\terminating) \rvert = \lvert \sigma \rvert$.

\begin{lem}
\label{lemma-sperner-continuous}
There is no continuous map $f : \lvert \sigma \rvert \rightarrow \lvert \partial \sigma \rvert$
such that for every proper face $\sigma' \subset \sigma$, 
$f(\lvert \sigma' \rvert)~\subseteq~\lvert \sigma' \rvert$.
\end{lem}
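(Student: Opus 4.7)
The plan is to reduce the claim to the standard fact that $\lvert\partial\sigma\rvert$, a topological $(n-2)$-sphere, is not a retract of the contractible $(n-1)$-ball $\lvert\sigma\rvert$. I would argue by contradiction, assuming such a continuous map~$f$ exists.

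First, I would show that $f|_{\lvert\partial\sigma\rvert}$ is homotopic to $\id_{\lvert\partial\sigma\rvert}$ as self-maps of $\lvert\partial\sigma\rvert$. Viewing $\lvert\sigma\rvert$ as its standard geometric realization inside a Euclidean space, I would consider the straight-line homotopy $H(x,t) = (1-t)x + tf(x)$. For any $x\in\lvert\partial\sigma\rvert$, let $\sigma'$ be the minimal face of~$\sigma$ whose realization contains~$x$; since $x\in\lvert\partial\sigma\rvert$, this $\sigma'$ is a proper face of~$\sigma$, and by the hypothesis on~$f$ we have $f(x)\in\lvert\sigma'\rvert$. Since $\lvert\sigma'\rvert$ is convex, the entire segment $\{H(x,t)\mid t\in[0,1]\}$ lies in $\lvert\sigma'\rvert\subseteq\lvert\partial\sigma\rvert$. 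Hence $H$ is a continuous homotopy within $\lvert\partial\sigma\rvert$ from $\id$ to $f|_{\lvert\partial\sigma\rvert}$.

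Second, I would derive a contradiction using homology. The composition $\lvert\partial\sigma\rvert\hookrightarrow\lvert\sigma\rvert\xrightarrow{f}\lvert\partial\sigma\rvert$ equals $f|_{\lvert\partial\sigma\rvert}$, which by the previous step is homotopic to the identity and hence induces the identity on $\tilde H_{n-2}(\lvert\partial\sigma\rvert;\mathbb{Z})\cong\mathbb{Z}$. But this induced map factors through $\tilde H_{n-2}(\lvert\sigma\rvert;\mathbb{Z})=0$, which vanishes since $\lvert\sigma\rvert$ is contractible. This is impossible, and the relevant range $n\geq 2$ is precisely the case of set agreement with at least two processes.

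The main obstacle is the first step, namely verifying that the straight-line homotopy does not leave $\lvert\partial\sigma\rvert$; this is exactly where the face-preservation hypothesis $f(\lvert\sigma'\rvert)\subseteq\lvert\sigma'\rvert$ for every proper face $\sigma'$ is essential. Once that is in hand, the rest is a direct invocation of the classical no-retraction theorem for the ball, which is what the cited Lemma~4.3.5 of~\cite{HKR-book} encapsulates; the proof can either appeal to it as a black box or spell out the homology argument as above.
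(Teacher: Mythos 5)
Your proof is correct. Note that the paper does not actually prove Lemma~\ref{lemma-sperner-continuous}: it states it as a ``direct consequence of Lemma~4.3.5 in [HKR]'' and offers only the informal explanation that $\lvert\sigma\rvert$ is solid while $\lvert\partial\sigma\rvert$ has a hole. Your argument supplies the missing details in the standard way, and both of its steps check out: the straight-line homotopy $H(x,t)=(1-t)x+tf(x)$ stays inside $\lvert\partial\sigma\rvert$ because, for $x$ with carrier $\sigma'\subsetneq\sigma$, the hypothesis forces $f(x)\in\lvert\sigma'\rvert$ and convexity of $\lvert\sigma'\rvert$ keeps the whole segment there; and the composition $\lvert\partial\sigma\rvert\hookrightarrow\lvert\sigma\rvert\xrightarrow{f}\lvert\partial\sigma\rvert$, being homotopic to the identity, would induce the identity on $\tilde H_{n-2}(\lvert\partial\sigma\rvert)\cong\mathbb{Z}$ while factoring through $\tilde H_{n-2}(\lvert\sigma\rvert)=0$, a contradiction. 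You are also right to flag the degenerate range: for $n\geq 2$ the homology argument applies, and for $n=1$ the statement is vacuous since $\lvert\partial\sigma\rvert$ is empty. The only difference from the paper is one of packaging: the paper outsources the topology to a black-box lemma that is tailored to the Sperner-type hypothesis, whereas you reduce to the no-retraction theorem via an explicit homotopy; your route is self-contained and makes visible exactly where the face-preservation hypothesis $f(\lvert\sigma'\rvert)\subseteq\lvert\sigma'\rvert$ is used, at the cost of invoking singular homology rather than a purely combinatorial Sperner argument.
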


One can understand Lemma~\ref{lemma-sperner-continuous} 
as a continuous version of the discrete Sperner's lemma. 
Intuitively, it states that if a continuous map 
$f : \lvert \sigma \rvert \rightarrow \lvert \partial \sigma \rvert$ maps 
the boundary of $\sigma$ to itself 
(\ie, $f(\lvert \sigma' \rvert) \subseteq \lvert \sigma' \rvert$,
for each $\sigma' \subset \sigma$), 
similar to Sperner's lemma hypothesis, then $f$ cannot exist because
the mapping cannot be extended to the interior of $\sigma$,
since $\lvert \sigma \rvert$ is solid whereas $\lvert \partial \sigma \rvert$
has a hole.

As explained above, Theorem~\ref{thm:GACT-fixed} 
and assumption $\lvert K(\terminating) \rvert = \lvert \sigma \rvert$ imply that 
if $(n-1)$-set agreement is solvable in $M$,
then there exists a continuous map 
$f : \lvert \sigma \rvert \rightarrow \lvert \partial \sigma \rvert$
such that for every proper face $\sigma' \subset \sigma$, 
$f(\lvert \sigma' \rvert)~\subseteq~\lvert \sigma' \rvert$.
Such continuous map $f$ contradicts Lemma~\ref{lemma-sperner-continuous}.
Therefore, $(n-1)$-set agreement is impossible in $M$.
\end{proof}

\section{Characterization of Task Solvability in General Models}\label{sec:topology}

In this section we present a topological 
solvability characterization in general models, hence showing that 
the topology approach is applicable in all models of computation. 
As anticipated, the characterization demands simplicial maps
to be continuous, which is particularly relevant if complexes are infinite. 
Differently from sub-IIS models in the previous section,
where continuity naturally arises in protocols complexes as they
are subdivisions (hence embedded in a Euclidean space), 
the general case requires to equip protocol 
complexes with a topology, which is used to capture continuity.

Recall that the set of process views of executions in which process~$p$ 
is correct is denoted $\cview_p$.
These are the executions in which we demand process~$p$ to decide on an output value.
We always have $\cview_p \subseteq \view_p$.
For every process~$p$ we define a topology on the set~$\cview_p$
of correct process-$p$ local-view sequences induced by the distance function
\begin{equation}
d_p(\alpha, \beta)
=
2^{-T_p(\alpha,\beta)}
\end{equation}
where $T_p(\alpha,\beta)$ is defined as the smallest index at which the local views in the process-view sequences~$\alpha$ and~$\beta$ differ.
If no such index exists, then $T_p(\alpha,\beta)=\infty$.
This means that the distance between two process-view sequences is smaller the later the process can detect a difference between the two.
If~$\alpha$ and~$\beta$ do not differ in any index, then $\alpha=\beta$ and $d_p(\alpha,\beta) = 2^{-\infty} = 0$.
A variant of this distance function, which considers complete executions instead of local views, was introduced by Alpern and Schneider~\cite{AS84}.

We first establish that the distance function~$d_p$ is an ultrametric.

\begin{lemrep}
The distance function~$d_p$ is an ultrametric on~$\cview_p$.
\end{lemrep}
\begin{proof}
If $\alpha=\beta$, then $d_p(\alpha,\beta)=0$.
If $d_p(\alpha,\beta)=0$, then $T_p(\alpha,\beta)=\infty$, which means that there is no index at which they differ by definition, \ie, $\alpha = \beta$.
This shows that~$d_p$ is positive definite.

The symmetry condition $d_p(\alpha,\beta)=d_p(\beta,\alpha)$ holds since the definition of $T_p(\alpha,\beta)$ is symmetric in~$\alpha$ and~$\beta$.

We now prove the ultrametric triangle inequality by showing:
\begin{equation}\label{eq:dist:ultrametric}
T_p(\alpha,\gamma) \geq \min\big\{ T_p(\alpha,\beta) , T_p(\beta,\gamma) \big\}
\end{equation}
Assume by contradiction that $T_p(\alpha,\gamma) < \min\big\{ T_p(\alpha,\beta) , T_p(\beta,\gamma) \big\}$.
Set $t = T_p(\alpha,\gamma)$.
Since $t < \infty$ and $t < T_p(\alpha,\beta)$, all local views up to index~$t$ coincide in both sequences~$\alpha$ and~$\beta$.
Likewise, all local views up to index~$t$ coincide in both sequences~$\beta$ and~$\gamma$.
But then, by transitivity of the equality relation on local views, all local views up to index~$t$ coincide also in the two sequences~$\alpha$ and~$\gamma$, which means $T_p(\alpha,\gamma) > t = T_p(\alpha,\gamma)$; a contradiction.
\end{proof}

In the next lemma, we establish the fundamental fact that the decision functions for process~$p$ are exactly the continuous functions $\cview_p \to V(\outputs)$ when using~$d_p$ on~$\cview_p$ and the discrete metric on~$V(\outputs)$.

\begin{lemrep}\label{lem:char:cont}
Let $\delta_p:\cview_p\to V(\outputs)$ be a function.
The following are equivalent:
\begin{enumerate}
\item There is a protocol such that process~$p$ decides the value~$\delta_p(\alpha)$ in every execution $E\in\exec$ with local view $\pi_p(E)=\alpha\in\cview_p$.
\item The function~$\delta_p$ is continuous when equipping $\cview_p$ with the topology induced by~$d_p$ and~$V(\outputs)$ with the discrete topology.
\end{enumerate}
\end{lemrep}

\begin{proof}
($\Rightarrow$):
To show that~$\delta_p$ is continuous, we will show that the inverse image of any singleton $\{o\}\subseteq V(\outputs)$ is open with respect to~$d_p$.
This then implies that the inverse image of any subset $O\subseteq V(\outputs)$, \ie, of any subset of~$V(\outputs)$ that is open with respect to the discrete topology, is open with respect to~$d_p$.

Let $\alpha \in \delta_p^{-1}[\{o\}]$.
Because process~$p$ decides the value~$o$ in the local view~$\alpha$, there exists an index~$T$ at which this decision has already happened in~$\alpha$.
Choose $\varepsilon = 2^{-T}$.
Now let $\alpha'\in\cview_p$ with $d_p(\alpha,\alpha') < \varepsilon$.
Then, by definition of~$d_p$, the local views of~$\alpha$ and of~$\alpha'$ are indistinguishable for process~$p$ up to and including index~$T$.
But then, process~$p$ needs to have decided value~$o$ at index~$T$ in local view~$\alpha'$ as well.
We thus have $\delta_p(\alpha') = o$, which means that $\alpha' \in \delta_p^{-1}[\{o\}]$.
Therefore, the inverse image $\delta_p^{-1}[\{o\}]$ is open with respect to~$d_p$.

($\Leftarrow$):
We define the protocol for process~$p$ in the following way.
Decide value $o\in V(\outputs)$ in the $t$\textsuperscript{th} step if the set of all local views in $\cview_p$ that are indistinguishable from the current execution in the first~$t$ steps of process~$p$ is included in the inverse image $\delta_p^{-1}[\{o\}]$.

Let $E\in\exec$ be an execution with $p\in\correct(E)$.
We will show that process~$p$ decides value $o=\delta_p(\alpha)$ where $\alpha = \pi_p(E)$.
By definition of~$o$, we have $\alpha \in \delta_p^{-1}[\{o\}]$.
By continuity of~$\delta_p$, the inverse image $\delta_p^{-1}[\{o\}]$ is an open set in $\cview_p$.
There hence exists an $\varepsilon > 0$ such that $\alpha'\in \delta_p^{-1}[\{o\}]$ for all $\alpha'\in\cview_p$ with $d_p(\alpha,\alpha')<\varepsilon$.
It remains to show that process~$p$ eventually decides the value~$o$ and that it does not decide any other value in execution~$E$.
Setting $T = \lceil \log_2 \varepsilon \rceil$, we see that, by design of the protocol's decision rule, process~$p$ has decided value~$o$ at the latest in step number~$T$.
To show that process~$p$ does not decide any other value than~$o$, it suffices to observe that $d(\alpha,\alpha)=0<2^{-t}$ for every $t\geq 0$ and $\alpha \in \delta_p^{-1}[\{o\}]$.
\end{proof}

To formulate and prove our characterization for the solvability of tasks in general models, we define a structure that combines the notions of chromatic simplicial complexes and the notion of point-set topology of sequences of local views.
Formally, a \emph{topological chromatic simplicial complex} is a chromatic simplicial complex whose set of vertices is equipped with a topology.
A vertex map between two topological chromatic simplicial complexes is a \emph{morphism} if it is continuous, chromatic, and simplicial.

The protocol complex~$\protocol$ is a (possibly infinite) topological chromatic simplicial complex defined as follows.
The set of vertices of~$\protocol$ is the disjoint union $V(\protocol) = \bigsqcup_{p\in\Pi} \cview_p$ of the correct local-view spaces.
The vertices from~$\cview_p$ are colored with the process name~$p$.
We equip the set of vertices with the disjoint-union topology, \ie, the finest topology that makes all embedding maps $\iota_p:\cview_p\to V(\protocol)$ continuous.

A set~$\sigma$ of vertices of~$\protocol$ is a simplex of~$\protocol$ if and only if the local views are consistent with the views of correct processes in an execution, \ie, if it is of the form
\begin{equation}
\sigma
=
\left\{
\pi_p(E) \mid p\in P
\right\}
\end{equation}
for some execution $E \in \exec$ and some set $P\subseteq \correct(E)$.

The execution map $\Xi: \inputs \to 2^\protocol$ is defined by mapping every input simplex in~$\inputs$ to the local views of correct processes of executions in which the initial values of participating processes are as in the input simplex.
Formally,
\begin{equation}
\Xi(\sigma)
=
\big\{
\{ \pi_p(E) \mid p\in P \}
\mid
E\in \compatible(\sigma)
\wedge 
P\subseteq \correct(E)
\big\}
\end{equation}
where $\compatible(E)\subseteq\exec$ denotes the set of executions that are compatible with the initial values described by the input simplex~$\sigma$.
That is,
\begin{equation}
\compatible(\sigma)
=
\left\{
E\in\exec \mid 
\participating(E) \subseteq \chi[\sigma]
\wedge
\forall p\in\participating(E)\colon \init_p(E) = \ell(\pi_p(\sigma))
\right\}
\end{equation}
where~$\pi_p(\sigma)$ denotes the unique vertex of the simplex~$\sigma$ with the color~$p$, if it exists,
and~$\ell(\pi_p(\sigma))$ is the input (label) of
vertex~$\pi_p(\sigma)$. 
The execution map~$\Xi$ assigns a subcomplex of~$\protocol$ to every input simplex~$\sigma$ in~$\inputs$.
As in the classical finite-time setting~\cite[Definition~8.4.1]{HKR-book}, 
the next lemma shows that it is a carrier map:

\begin{lemrep}
The execution map\/~$\Xi$ is a carrier map such that $\displaystyle \protocol = \bigcup_{\sigma\in\inputs} \Xi(\sigma)$.
\end{lemrep}
\begin{proof}
We first prove that~$\Xi$ is a carrier map.
Let $\sigma\subseteq\tau$.
We need to prove that $\Xi(\sigma)\subseteq \Xi(\tau)$.
We first show that $\compatible(\sigma)\subseteq \compatible(\tau)$.
Let $E\in \compatible(\sigma)$.
Then $\participating(E) \subseteq \chi[\sigma] \subseteq \chi[\tau]$ since $\sigma\subseteq \tau$, which means that $E\in \compatible(\tau)$ because the second condition in the definition is fulfilled since $\pi_p(\sigma)=\pi_p(\tau)$ for all $p\in\chi[\sigma]$.
But then, for every $\varphi \in \Xi(\sigma)$, we also have $\varphi \in \Xi(\tau)$ since every~$E$ in the definition of~$\Xi(\sigma)$ is also valid for~$\Xi(\tau)$.

To prove $\protocol = \bigcup_{\sigma\in\inputs} \Xi(\sigma)$, it suffices to show $\exec = \bigcup_{\sigma\in\inputs}\compatible(\sigma)$.
The inclusion of $\compatible(\sigma)$ in~$\exec$ is immediate by its definition.
So let $E\in\exec$ be any execution.
We need to show the existence of a simplex $\sigma\in\inputs$ such that $E\in\compatible(\sigma)$.
For this, it suffices to choose $\sigma = \{ v(p,\init_p(E)) \mid p\in\participating(E) \}$.
This set is an input simplex since the initial values in~$\exec$ are chosen according to~$\inputs$ by definition.
\end{proof}

In contrast to the classical finite-time setting, however,
the execution map is not necessarily rigid.
Whether it is depends on whether any finite execution prefix can be extended to a fault-free execution.
This is not the case, \eg, in many synchronous models.
If~$\Xi$ is not rigid, then, by definition, it is \emph{a fortiori} not chromatic.
It does, however, satisfy the inclusion 
\begin{equation}
\big\{ \chi(v) \mid v\in V(\Xi(\sigma)) \big\}
\ \subseteq\ 
\chi[\sigma]
\end{equation}
for all input simplices $\sigma\in\inputs$.
In other words, the colors of~$\Xi(\sigma)$ are included in the colors of~$\sigma$; no new process names appear.
It turns out that the stronger assumptions of rigidity or chromaticity are not necessary to show our solvability characterization.

\begin{thm}\label{thm:topology-approach}
The task $\task = (\inputs, \outputs, \Delta)$ is solvable if and only if there exists a continuous chromatic simplicial map $\delta:\protocol\to\outputs$ such that $\delta\circ \Xi$ is carried by~$\Delta$.
\end{thm}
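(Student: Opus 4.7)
The plan is to prove both directions of the equivalence by combining Lemma~\ref{lem:char:cont}, which identifies per-process decision functions with continuous maps $\cview_p\to V(\outputs)$, with Lemma~\ref{lem:union:top}, which relates continuity on the disjoint union $V(\protocol)=\bigsqcup_{p\in\Pi}\cview_p$ to continuity on each summand. The two directions then correspond to assembling per-process decision maps $\delta_p$ into a global $\delta$, respectively restricting a global $\delta$ back to each $\cview_p$.

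For ($\Rightarrow$), assume a protocol solves $\task$, and for each $p\in\Pi$ let $\delta_p\colon\cview_p\to V(\outputs)$ be the induced decision function; by Lemma~\ref{lem:char:cont}, each $\delta_p$ is continuous. Define $\delta\colon V(\protocol)\to V(\outputs)$ by setting its restriction to every summand $\cview_p$ equal to $\delta_p$. Since $\delta_p(\alpha)$ has color $p$, the map $\delta$ is chromatic; Lemma~\ref{lem:union:top} yields its continuity. To see that $\delta\circ\Xi$ is carried by $\Delta$ (which in particular implies that $\delta$ is simplicial, by specializing to the input simplex associated to any simplex of $\protocol$), pick $\sigma\in\inputs$ and $\tau\in\Xi(\sigma)$. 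Write $\tau=\{\pi_p(E)\mid p\in P\}$ for some $E\in\compatible(\sigma)$ and $P\subseteq\correct(E)$, and set $\sigma'=\{v(p,\init_p(E))\mid p\in\participating(E)\}\subseteq\sigma$, so $\Delta(\sigma')\subseteq\Delta(\sigma)$ by monotonicity of the carrier map. Solvability gives $\{v(p,\decision_p(E))\mid p\in\correct(E)\}\in\Delta(\sigma')$, and $\delta(\tau)$ is a face of this simplex, hence $\delta(\tau)\in\Delta(\sigma')\subseteq\Delta(\sigma)$.

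For ($\Leftarrow$), assume such a $\delta$ exists. By Lemma~\ref{lem:union:top}, each $\delta\circ\iota_p\colon\cview_p\to V(\outputs)$ is continuous, and chromaticity forces its image to sit in the color-$p$ vertices of $\outputs$; set $\delta_p(\alpha)=\ell(\delta(\iota_p(\alpha)))$. Lemma~\ref{lem:char:cont} then produces, for each $p$, a decision rule for process $p$ realizing $\delta_p$, and their union is a protocol. To verify that this protocol solves $\task$, take any $E\in\exec$ and put $\sigma=\{v(p,\init_p(E))\mid p\in\participating(E)\}$, so that $E\in\compatible(\sigma)$ and $\tau=\{\pi_p(E)\mid p\in\correct(E)\}\in\Xi(\sigma)$; the hypothesis $\delta\circ\Xi\subseteq\Delta$ gives $\{v(p,\decision_p(E))\mid p\in\correct(E)\}=\delta(\tau)\in\Delta(\sigma)$, which is precisely the required task-validity condition.

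The main obstacle is conceptual rather than computational: unlike in the finite-time setting, the execution map $\Xi$ is neither rigid nor chromatic in general, because a simplex $\tau\in\Xi(\sigma)$ can have strictly fewer colors than $\sigma$ (it is built from an arbitrary subset $P\subseteq\correct(E)$) and the true initial simplex $\sigma'$ attached to $E$ may be a proper face of $\sigma$. The proof therefore has to route through face-closure of $\outputs$ and monotonicity of $\Delta$ instead of an exact dimensional match, and the bridge between the global continuous map on $\protocol$ and the per-process decision rules is provided, in both directions, by the universal property of the disjoint-union topology of $V(\protocol)$.
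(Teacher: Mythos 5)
Your proposal is correct and follows essentially the same route as the paper's proof: both directions hinge on Lemma~\ref{lem:char:cont} for the per-process decision/continuity correspondence and Lemma~\ref{lem:union:top} to pass between the disjoint union $V(\protocol)$ and the individual $\cview_p$, with the same $\compatible(\sigma)$/monotonicity argument for the carrier-map condition. The only (harmless) cosmetic difference is that you derive simpliciality of $\delta$ as a corollary of $\delta\circ\Xi$ being carried by $\Delta$ via the covering $\protocol=\bigcup_{\sigma\in\inputs}\Xi(\sigma)$, whereas the paper verifies it separately.
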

\begin{proof}
($\Rightarrow$):
Assume that there is a protocol that solves task~$T$.
Define the vertex map $\delta:\protocol\to\outputs$ by setting $\delta(\alpha)$ to be the vertex of~$\outputs$ with color~$p$ and label~$v$ where~$p$ is the unique process such that $\alpha\in\cview_p$ and~$v$ is the decision value of process~$p$ in an execution with local-view sequence~$\alpha$ when executing the protocol.

The map~$\delta$ is continuous on each individual~$\cview_p$ by Lemma~\ref{lem:char:cont}.
By Lemma~\ref{lem:union:top}, it is thus continuous on their disjoint union~$\protocol$.
The map~$\delta$ is chromatic since the color of the vertex $\alpha \in\cview_p$ is~$p$, as is the color of~$\delta(\alpha)$.

To prove that~$\delta$ is simplicial, let~$\varphi$ be a simplex of~$\protocol$.
Then, by definition, there exists an execution $E\in\exec$ and a set $P\subseteq \correct(E)$ such that $\varphi = \{ \pi_p(E) \mid p\in P\}$.
Set $\sigma = \{ v(p,\init_p(E)) \mid p\in\Pi \}$ and $\tau = \{ v(p,\decision_p(E) \mid p\in\correct(E) \}$.
Then, since the protocol solves task~$T$, we have $\tau\in \Delta(\sigma)$.
By definition of~$\delta$, we then have $\delta[\varphi] = \{ v(p,\decision_p(E)) \mid p\in P \} \subseteq \tau \in\Delta(\sigma)\subseteq \outputs$, which means that $\delta[\varphi]\in\outputs$ and hence that~$\delta$ is simplicial.

It remains to prove that~$\delta\circ\Xi$ is carried by~$\Delta$.
So let~$\sigma \in \inputs$ and $\tau \in (\delta\circ\Xi)(\sigma)$.
We need to show that $\tau\in\Delta(\sigma)$.
By the definitions of~$\Xi$ and~$\delta$, there exists an execution $E\in\compatible(\sigma)$ and a set $P\subseteq\correct(E)$ such that  $\tau = \{v(p,\decision_p(E)) \mid p\in P\}$.
Since the protocol solves task~$T$, we have $\tau' = \{v(p,\decision_p(E))\mid p\in\correct(E)\} \in \Delta(\sigma')$ where $\sigma' = \{v(p,\init_p(E)) \mid p\in\participating(E) \}$.
Since $\tau\subseteq\tau'$ and~$\Delta(\sigma')$ is a simplicial complex, 
we deduce that $\tau\in\Delta(\sigma')$.
Now, because $E\in \compatible(\sigma)$, we have $\participating(E) \subseteq \chi[\sigma]$ and $\sigma' \subseteq \sigma$.
It thus follows that $\tau \in \Delta(\sigma') \subseteq \Delta(\sigma)$ because~$\Delta$ is a carrier map.

\medskip

($\Leftarrow$):
The restriction~$\delta_p$ of~$\delta$ to the set $\cview_p$ is continuous because~$\delta$ is.
By Lemma~\ref{lem:char:cont} there hence exists a protocol such that every process~$p$ decides the value $\ell(\delta(\pi_p(E))) \in V^{\mathrm{out}}$ for every execution~$E$ in which~$p$ is correct.

Let $E\in\exec$ be any execution and define the sets $\sigma = \{v(p,\init_p(E)) \mid p\in \participating(E)\}$ and $\tau = \{v(p,\decision_p(E)) \mid p\in\correct(E)\}$.
To show that the protocol solves task~$T$, it remains to show that $\tau\in\Delta(\sigma)$.
Since $\delta\circ\Xi$ is carried by~$\Delta$, it suffices to prove $\tau \in (\delta\circ\Xi)(\sigma)$.
Setting $\varphi = \{ \pi_p(E) \mid p\in\correct(E) \}$, we have $\tau = \delta[\varphi]$.
We are thus done if we show $\varphi \in \Xi(\sigma)$.
But this follows from $E\in\compatible(\sigma)$, which is true by construction of~$\sigma$.
\end{proof}

\section{Relationship to the Classical Finite-Time Approach}

In this section, we formalize the relationship between our infinite protocol complex used for the general solvability characterization in Theorem~\ref{thm:topology-approach} and the classically studied finite-time protocol complexes.
Besides demonstrating that the classical formalism is a special case of ours, we show the finite-time approach is sufficient for all compact models.
More specifically, we show that it is possible to restrict the study to finite-time protocols if the computational model is compact.
Formally, a topological space is \emph{compact} if every open cover has a finite subcover.
Many computational models that are defined by safety predicates are compact.
We use the concept of projective limit from category theory~\cite{MacLane78} to formalize the relationship between finite-time and infinite-time complexes.
In particular, we show that the infinite-time complex is the projective limit of the finite-time complexes if the model is compact.

\smallskip\noindent{\bf Finite-Time Complexes.}
For every nonnegative integer~$T$, we define the time-$T$ protocol complex $\timeprotocol{T}$ as follows:
\begin{itemize}
\item The set of vertices of $\timeprotocol{T}$ is the disjoint union of the sets $\timecview{p}{T}$ where~$p$ varies in the set~$\Pi$ of processes.

\item The set $\timecview{p}{T}$ is defined as the set of open balls of radius $\varepsilon=2^{-T}$ in $\cview_p$.
These balls are either identical or disjoint by Lemma~\ref{lem:ultrametric:balls}.

\item All vertices of $\timecview{p}{T}$ are colored with~$p$.

\item A set of vertices of $\timeprotocol{T}$ is a simplex of $\timeprotocol{T}$ if and only if there is a simplex of~$\protocol$ that is formed by choosing one element in each vertex of the set.

\item The topology on $V(\timeprotocol{T})$ is the discrete topology.
\end{itemize}

This definition makes $\timeprotocol{T}$ a topological chromatic simplicial complex.
As a chromatic simplicial complex, it is isomorphic to the classical finite-time construction of protocol complexes~\cite{HKR-book}.
The finite-time execution map $\Xi_T:\inputs\to 2^{\timeprotocol{T}}$ is defined by
\begin{equation}
\Xi_T(\sigma)
=
\big\{
B_T[\tau]
\ \big|\ 
\tau\in \Xi(\sigma)
\big\}
\end{equation}
where $B_T(\alpha) = \{\beta\in V(\protocol) \mid d(\alpha,\beta) < 2^{-T}\}$ is the function that takes each vertex~$\alpha$ of~$\protocol$ to the open $2^{-T}$-ball in which it is included.

\smallskip\noindent{\bf Projective Limits.}
We will show that, if the model is compact, then~$\protocol$ is the limit of the~$\timeprotocol{T}$ in a precise sense.
For this, we use the notion of projective limits from category theory~\cite{MacLane78}, which we introduce in this subsection.

A \emph{category} is a class of \emph{objects} and a class of \emph{morphisms} between objects.
Every morphism $f:X\to Y$ is assigned a domain object~$X$ and a codomain object~$Y$.
For compatible morphisms $f:X\to Y$ and $g:Y\to Z$, the composition $g\circ f$ is a morphism $X\to Z$.
The composition operator is required to be associative.
For every object~$X$, the existence of an identity morphism $\id_X:X\to X$ is required.
The identity morphism satisfies $f \circ \id_x = f$ for all morphism $f:X\to Y$ with domain~$X$ and $\id_X \circ g = g$ for all morphisms $g:Z\to X$ with codomain~$X$.

A sequence $(X_T)_{T\geq0}$ of objects of a category can be transformed into an \emph{inverse system} by specifying a family $(f_{S,T})_{0\leq S\leq T}$ of morphisms $f_{S,T}:X_T\to X_S$ such that $f_{T,T} = \id_{X_T}$ and $f_{R,T} = f_{R,S} \circ f_{S,T}$ for all $0\leq R\leq S\leq T$.
The \emph{projective limit} of the sequence is then an object~$X$ together with morphisms $\pi_T:X\to X_T$ such that $\pi_S = f_{S,T}\circ \pi_T$ for all $0\leq S\leq T$ and with the universal property that for any other such object~$Y$ and morphisms $\psi_T: Y\to X_T$, there exists a unique morphism $u:Y\to X$ such that the following diagram commutes for all $0\leq S\leq T$:

\begin{equation}\label{eq:cd:proj:limit}
\begin{tikzcd}
		& Y \arrow[ldd, "\psi_T"'] \arrow[d, "u"] \arrow[rdd, "\psi_S"] &     \\
		& X \arrow[ld, "\pi_T"] \arrow[rd, "\pi_S"']                    &     \\
		X_T \arrow[rr, "f_{S,T}"'] &                                                               & X_S
\end{tikzcd}
\end{equation}

For every pair of integers $S$ and $T$, $0\leq S\leq T$, define the vertex maps 
$f_{S,T} \colon \timeprotocol{T} \to \timeprotocol{S}$ by setting $f_{S,T}(B)$ to be the unique open $2^{-S}$-ball of $\timeprotocol{S}$ in which the open $2^{-T}$-ball~$B$ of $\timeprotocol{T}$ is included.
These are morphisms between topological chromatic simplicial complexes and they satisfy $f_{R,T} = f_{R,S}\circ f_{S,T}$ for all $0\leq R\leq S\leq T$.
This makes the sequence of the~$\timeprotocol{T}$ an inverse system.

\begin{lem}\label{lem:proj:limit:exists}
The projective limit of the sequence of complexes $\timeprotocol{T}$ exists.
\end{lem}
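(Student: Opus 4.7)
The plan is to construct the projective limit explicitly as a subcomplex of the product, following the standard category-theoretic recipe, and then verify that the construction lives in the category of topological chromatic simplicial complexes and satisfies the universal property.

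\medskip

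First, I would define the candidate limit object $\m{L}$ as follows. Its vertex set is
\begin{equation}
V(\m{L}) = \Bigl\{ (v_T)_{T\geq 0} \;\Big|\; v_T \in V(\timeprotocol{T}),\ f_{S,T}(v_T) = v_S \text{ for all } 0\leq S\leq T \Bigr\} ,
\end{equation}
equipped with the subspace topology inherited from the product $\prod_T V(\timeprotocol{T})$ (each factor carrying the discrete topology). The coloring is inherited componentwise: since each $f_{S,T}$ is chromatic, every consistent sequence has a well-defined color $\chi(v_0) = \chi(v_1) = \cdots$. A finite set $\{(v^0_T)_T, \dots, (v^k_T)_T\}$ of vertices of $\m{L}$ is declared to be a simplex of $\m{L}$ if and only if $\{v^0_T,\dots,v^k_T\}$ is a simplex of $\timeprotocol{T}$ for every $T\geq 0$. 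The projection morphisms $\pi_T:\m{L}\to \timeprotocol{T}$ are given by $\pi_T\big((v_S)_S\big) = v_T$.

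\medskip

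Next, I would verify that $\m{L}$ is a topological chromatic simplicial complex and that each $\pi_T$ is a morphism. Closure under subsets and chromaticity of simplices follow from the corresponding properties in each $\timeprotocol{T}$ together with the fact that each $f_{S,T}$ is a chromatic simplicial morphism. Continuity of $\pi_T$ is immediate from the definition of the product topology (it is a composition of the inclusion $\m{L}\hookrightarrow\prod_S V(\timeprotocol{S})$ with the canonical coordinate projection, which is continuous by Lemma~\ref{lem:prod:top}), and $\pi_T$ is simplicial and chromatic by construction. Compatibility $\pi_S = f_{S,T}\circ\pi_T$ holds since $v_S = f_{S,T}(v_T)$ for every sequence in $V(\m{L})$.

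\medskip

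It remains to establish the universal property. Given any other topological chromatic simplicial complex $Y$ with morphisms $\psi_T:Y\to\timeprotocol{T}$ satisfying $\psi_S = f_{S,T}\circ\psi_T$, define $u:Y\to\m{L}$ on vertices by $u(y) = \bigl(\psi_T(y)\bigr)_{T\geq 0}$. The compatibility conditions on the $\psi_T$ ensure that $u(y)$ actually lies in $V(\m{L})$. Uniqueness is forced by $\pi_T \circ u = \psi_T$. To check that $u$ is a morphism, chromaticity is inherited from the $\psi_T$, and the simplicial property follows because if $\varphi$ is a simplex of $Y$, then $\psi_T[\varphi]$ is a simplex of $\timeprotocol{T}$ for every $T$, which is precisely the condition for $u[\varphi]$ to be a simplex of $\m{L}$. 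Continuity of $u$ follows from Lemma~\ref{lem:prod:top} applied to $g = u$ viewed as a map into the product, since each component $\pi_T\circ u = \psi_T$ is continuous.

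\medskip

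The main obstacle is purely bookkeeping: one must be careful that the declared set of simplices of $\m{L}$ really forms a simplicial complex (closure under faces is immediate) and that it is rich enough that, for every cone $(Y,\psi_T)$, the induced vertex map $u$ carries simplices of $Y$ to simplices of $\m{L}$. This is handled cleanly by the componentwise definition of simplices in $\m{L}$. No compactness hypothesis is needed for mere existence of the limit; that assumption will only be used in the subsequent theorem to identify $\m{L}$ with $\protocol$.
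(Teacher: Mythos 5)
Your proposal is correct and follows essentially the same route as the paper: both construct the limit as the set of compatible sequences inside the product, equip it with the product topology and componentwise coloring, declare simplices componentwise, and verify the universal property with continuity of the mediating map coming from Lemma~\ref{lem:prod:top}. Your closing remark that compactness is not needed for mere existence also matches the paper's structure, where compactness enters only later to identify the limit with $\protocol$.
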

\begin{proof}
From the category of sets
we borrow the usual limit construction
\begin{equation}\label{eq:def:proj:limit:vertices}
V(\mathcal{L})
=
\left\{
B \in \prod_{T\geq 0} V(\timeprotocol{T})
\ \Big|\ 
\forall 0\leq S\leq T\colon
B_S = f_{S,T}(B_T)
\right\}
\end{equation}
with the vertex-map projections
$\pi_T:\mathcal{L}\to\timeprotocol{T},\ \pi_T(B) = B_T$
for the set of vertices of the purported limit complex~$\mathcal{L}$.
We have $\pi_S = f_{S,T} \circ \pi_T$ for all $0\leq S\leq T$ by construction since $f_{S,T}(\pi_T(B)) = f_{S,T}(B_T) = B_S = \pi_S(B)$ for all vertices~$B$ of~$\mathcal{L}$.

We equip the set~$V(\mathcal{L})$ with the product topology.
We define a set $\sigma\subseteq V(\mathcal{L})$ to be a simplex of~$\mathcal{L}$ if and only if every projection~$\pi_T[\sigma]$ of~$\sigma$ is a simplex of~$\timeprotocol{T}$.
The color of a vertex~$B$ of~$\mathcal{L}$ is the unique color of all the~$\pi_T(B)$.
Uniqueness of this color follows from the fact that the~$f_{S,T}$ are chromatic, which implies that colors cannot change in a sequence $B\in V(\mathcal{L})$ since $B_S = f_{S,T}(B_T)$ means that~$B_S$ and~$B_T$ have the same color.
These definitions make~$\mathcal{L}$ a topological chromatic simplicial complex and all the projections~$\pi_T$ morphisms in this category.

It remains to prove the universal property of the projective limit.
So let~$\mathcal{K}$ be a topological chromatic simplicial complex and let $\psi_T:\mathcal{K}\to\timeprotocol{T}$ be morphisms such that $\psi_S = f_{S,T}\circ\psi_T$ for all $0\leq S\leq T$.
We need to show the existence of a unique morphism $u:\mathcal{K}\to\mathcal{L}$ such that $\psi_T = \pi_T\circ u$ for all~$T$.
We set $u(k) = \big( \psi_T(k) \big)_{T\geq0}$ for every vertex~$k$ of~$\mathcal{K}$.
We have $\pi_T(u(k)) = \psi_T(k)$ for every~$k$ by construction and the definition of the projection~$\pi_T$.
This also implies uniqueness of~$u$.
Continuity of~$u$ follows from Lemma~\ref{lem:prod:top} and the continuity of the~$\psi_T$.
To show that~$u$ is simplicial, let~$\sigma$ be a simplex of~$\mathcal{K}$.
Since~$\psi_T$ is simplicial, the set $\psi_T[\sigma]$ is a simplex of~$\timeprotocol{T}$ for every $T\geq0$.
But then we deduce that $\pi_T[u[\sigma]] = (\pi_T\circ u)[\sigma] = \psi_T[\sigma]$ is a simplex of~$\timeprotocol{T}$ for all~$T\geq0$, which means that the set~$u[\sigma]$ is a simplex of~$\mathcal{L}$ by definition.
To show that~$u$ is chromatic, we use the fact that~$\pi_T$ is chromatic to deduce that the vertex~$u(k)$ has the same color as  $\pi_T(u(k)) = (\pi_T\circ u)(k) =\psi_T(k)$, which, in turn, has the same color as~$k$ since~$\psi_T$ is chromatic.
This shows the universal property of the limit complex~$\mathcal{L}$ and concludes the proof of the lemma.
\end{proof}

We can equip the set of executions with the metric
$d(E,E') = 2^{-K}$ where $K = \inf\{k\geq0\mid E_k\neq E_k'\}$,
which measures how many configurations are identical in two execution prefixes~\cite{AS84}.
With this topology on~$\exec$, the projection maps $\pi_p:\exec\to\view_p$ are continuous.
In fact, continuity of the map means that each local view needs to be determined by some finite prefix of the execution.
We have the following lemma:

\begin{lem}
If\/~$\exec$ is compact, then~$V(\protocol)$ is compact as well, and $\protocol$ is the projective limit of the~$\timeprotocol{T}$.
\end{lem}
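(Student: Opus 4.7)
My plan has two parts: first establish compactness of $V(\protocol)$ via an ultrametric argument, then identify $\protocol$ with the projective limit $\mathcal{L}$ from Lemma~\ref{lem:proj:limit:exists} using its universal property.

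For compactness, since $\Pi$ is finite it suffices to show that each $\cview_p$ is compact, and for the ultrametric $(\cview_p,d_p)$ this reduces to total boundedness plus completeness. I would deduce total boundedness from the fact that, for every $T$, the composition of $\pi_p$ with the projection onto the first $T{+}1$ local views is continuous into a discrete space; compactness of $\exec$ then forces the set of such prefixes of views of correct executions to be finite, yielding a finite cover of $\cview_p$ by open $2^{-T}$-balls. Completeness relies on compactness of $\exec$ again: given a Cauchy sequence $(\alpha_k)$ in $\cview_p$, I lift each $\alpha_k$ to some $E_k\in\exec$ with $p\in\correct(E_k)$, extract a subsequence $E_{k_j}\to E$ by compactness, and use continuity of $\pi_p$ to conclude that $\pi_p(E_{k_j})\to\pi_p(E)$ in $\view_p$; the key observation is that a Cauchy sequence of infinite $p$-view sequences must have an infinite sequence as its limit, forcing $p\in\correct(E)$ and hence $\pi_p(E)\in\cview_p$.

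For the projective-limit identification, I would first check that the maps $q_T\colon\protocol\to\timeprotocol{T}$ sending a vertex $\alpha$ to its open $2^{-T}$-ball $B_T(\alpha)$ are morphisms of topological chromatic simplicial complexes (locally constant on each $\cview_p$, chromatic, and simplicial by the definition of $\timeprotocol{T}$) satisfying $q_S=f_{S,T}\circ q_T$. The universal property of $\mathcal{L}$ then produces a unique morphism $u\colon\protocol\to\mathcal{L}$ factoring the $q_T$. Injectivity of $u$ on vertices is immediate from $d_p$ being separating, and surjectivity is where completeness pays off: given a coherent sequence $(B_T)_T\in V(\mathcal{L})$ of common color $p$, I choose $\alpha_T\in B_T$, observe that $(\alpha_T)$ is Cauchy in $\cview_p$, and take its limit to obtain an $\alpha\in\cview_p$ with $B_T(\alpha)=B_T$ for every $T$.

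For simplices, the nontrivial direction lifts a simplex $\tau=\{(B_T^p)_T\mid p\in P\}$ of $\mathcal{L}$ to a simplex of $\protocol$: each $T$ provides an execution $E_T\in\exec$ together with choices $\pi_p(E_T)\in B_T^p$ for all $p\in P$, compactness of $\exec$ extracts a convergent subsequence $E_{T_j}\to E$, and the same limit-of-infinite-views argument forces $p\in\correct(E)$ with $\pi_p(E)\in\bigcap_T B_T^p$, so $\{\pi_p(E)\mid p\in P\}$ is a simplex of $\protocol$ sent by $u$ onto $\tau$. Finally, both $V(\protocol)$ and $V(\mathcal{L})$ are compact Hausdorff (the latter being a closed subspace of the compact product $\prod_T V(\timeprotocol{T})$ of finite discrete spaces), so the continuous bijection $u$ is automatically a homeomorphism and hence an isomorphism in the category of topological chromatic simplicial complexes. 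The hardest point throughout is that $\{E\mid p\in\correct(E)\}$ need not be closed in $\exec$; the recurring trick of exploiting that limits of Cauchy sequences of infinite views are infinite is what lets continuity of $\pi_p$ push correctness through to the limit execution.
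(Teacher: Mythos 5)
Your proof is correct, and its skeleton coincides with the paper's: compactness of $V(\protocol)$ is deduced from compactness of $\exec$, the vertex correspondence is $\alpha\mapsto\bigl(B_T(\alpha)\bigr)_{T\geq0}$ with inverse given by the singleton $\bigcap_{T\geq0}B_T$, and simplices of the limit complex are lifted by extracting a convergent subsequence of witnessing executions in the compact space $\exec$. You diverge in three technical choices. First, for compactness of $\cview_p$ the paper writes only that it is a continuous image of a compact set; your total-boundedness-plus-completeness argument is more careful, since $\{E\in\exec\mid p\in\correct(E)\}$ need not be closed in $\exec$, and your observation that a convergent sequence of infinite view sequences cannot have a finite limit is precisely what is needed to push correctness of~$p$ through to the limit execution. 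Second, where the paper shows that $\bigcap_{T\geq0}B_T$ is a singleton via a descending chain of nonempty compact sets of vanishing diameter, you obtain the same point as the limit of a Cauchy sequence of representatives; the two devices are interchangeable given completeness. Third, the paper constructs both morphisms $f\colon\protocol\to\mathcal{L}$ and $g\colon\mathcal{L}\to\protocol$ explicitly and checks continuity of~$g$ through an explicit product-metric estimate, whereas you obtain $u$ from the universal property, prove bijectivity, and let the compact-Hausdorff argument deliver continuity of the inverse for free; this shortcut is legitimate because you separately verify, with the same subsequence argument as the paper, that every simplex of $\mathcal{L}$ is the image of a simplex of $\protocol$, so that $u^{-1}$ is simplicial and not merely continuous.
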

\begin{proof}
The sets~$\cview_p$ are compact as continuous images of a compact set.
But then~$V(\protocol)$ is compact as a finite union of compact sets.

To prove the second statement, we use the limit complex~$\mathcal{L}$ as defined in the proof of Lemma~\ref{lem:proj:limit:exists}.
We show that~$\protocol$ is isomorphic to~$\mathcal{L}$ by defining morphisms $f:\protocol\to\mathcal{L}$ and $g:\mathcal{L}\to\protocol$ such that $f\circ g = \id_{\mathcal{L}}$ and $g\circ f = \id_{\protocol}$.

We define the vertex maps $f(\alpha) = (B_T(\alpha))_{T\geq 0}$ and $g(B) = \alpha$ where~$\alpha$ is the unique element of the limit set $\lim_{T\to\infty} B_T = \bigcap_{T\geq0} B_T$.
The map~$g$ is well-defined since~$V(\protocol)$ is compact:
For every~$T\geq0$, the sets~$B_T(\alpha)$ with~$\alpha$ varying in~$V(\protocol)$ form an open covering of~$V(\protocol)$.
Since these open $2^{-T}$-balls are either disjoint or equal by Lemma~\ref{lem:ultrametric:balls}, the complement of~$B_T$ is open as a union of open balls.
But this means that~$B_T$ is a closed set.
Since every closed subset of a compact space is compact, so is~$B_T$.
The limit of any descending sequence of compact sets is nonempty, which in particular implies that $\lim_{T\to\infty} B_T \neq \emptyset$.
Since $\diam(B_T) \leq 2^{-T}\to 0$ as $T\to\infty$, the limit set cannot contain more than one element; it thus contains exactly one.

We have $f(g(B)) = B$ for all $B\in V(\mathcal{L})$ since $g(B)\in B_T$ for all $T\geq 0$, which means that $B_T(g(B)) = B_T$.
We also have $g(f(\alpha)) = \alpha$ for all $\alpha\in V(\protocol)$ since $\alpha \in B_T(\alpha)$ for all $T\geq0$, which means that $\alpha \in \bigcap_{T\geq0} B_T(\alpha)$.
It remains to show that~$f$ and~$g$ are morphisms of topological chromatic simplicial complexes.

The map~$f$ is continuous by Lemma~\ref{lem:prod:top} since the function $\pi_T\circ f : V(\protocol)\to V(\timeprotocol{T}), (\pi_T\circ f)(\alpha) = B_T(\alpha)$ is locally constant, and hence continuous, for every~$T\geq0$.
In fact, if $d(\alpha,\beta) < 2^{-T}$, then $B_T(\alpha) = B_T(\beta)$.
To show that the map~$g$ is continuous,
we use the formula $d(B, B') = 2^{-T}$ where $T = \inf\{t\geq 0\mid B_t\neq B_t' \}$
for the metric of the product of discrete spaces
(\eg, \cite[Lemma~4.7]{NSW19:podc}).
Using this formula, we see that $d(g(B),g(B')) \leq d(B,B')$, which proves that~$g$ is continuous.

To show that~$f$ is simplicial, let $\sigma \in \protocol$ be a simplex.
Then, by definition of the complex~$\timeprotocol{T}$, the set~$B_T[\sigma]$ is a simplex of~$\timeprotocol{T}$ for all $T\geq0$ and thus, by definition, the set~$f[\sigma]$ is a simplex of~$\mathcal{L}$.
To show that~$g$ is simplicial, let $\sigma \in \mathcal{L}$ be a simplex.
Then, by definition, the set~$\pi_T[\sigma]$ is a simplex of~$\timeprotocol{T}$ for every~$T\geq0$.
Setting $P = \chi[\sigma]$, there hence exist executions $E^{(T)}\in\exec$ for all $T\geq0$ such that $P\subseteq \correct(E)$ and $\pi_p(E^{(T)})\in \pi_T(\pi_p(\sigma))$ for all~$p\in P$.
Since~$\exec$ is compact, the sequence of the~$E^{(T)}$ has a convergent subsequence, \ie, there is an increasing sequence of indices~$T_s$ and some $E\in\exec$ such that $d(E^{(T_s)}, E) \to 0$ as $s\to\infty$.
The limit execution~$E$ of this subsequence satisfies $\pi_p(E)\in \pi_T(\pi_p(\sigma))$ for all $p\in P$ since the~$\pi_T(\pi_p(\sigma))$ are compact and hence closed.
But then, since there is only one element in the intersection over $T\geq0$, we have $\pi_p(E) = g(\pi_p(\sigma))$.
Thus, $g[\sigma] = \{ \pi_p(E) \mid p\in P\}$ with $P\subseteq\correct(E)$, which means that~$g[\sigma]$ is a simplex of~$\protocol$.

To show that~$f$ is chromatic, we note that every~$B_T(\alpha)$ has the same color as~$\alpha$ since 
$d(\alpha,\beta)=2 > 2^{-T}$ for any~$\beta$ that has a different color than~$\alpha$.
By the same argument, $g$ is chromatic as well.
\end{proof}

\smallskip\noindent{\bf Sufficiency of Finite-Time Complexes for Compact Models.}
We can now state the fact that finite-time protocol complexes are sufficient to study compact models.

\begin{thm}\label{thm:projective:limit:compact}
If\/~$V(\protocol)$ is compact, then the following are equivalent:
\begin{enumerate}
\item The task $T = (\inputs,\outputs,\Delta)$ is solvable.
\item There is a continuous chromatic simplicial map $\delta:\protocol\to\outputs$ such that $\delta\circ\Xi$ is carried by~$\Delta$.
\item There is a time~$T$ such that there exists a chromatic simplicial map $\delta_T:\timeprotocol{T}\to\outputs$ such that $\delta_T\circ\Xi_T$ is carried by~$\Delta$.
\item The task $T = (\inputs, \outputs, \Delta)$ is solvable in a bounded number of local steps per process.
\end{enumerate}
\end{thm}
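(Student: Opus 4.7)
The plan is to close the cycle $(1)\Rightarrow(2)\Rightarrow(3)\Rightarrow(4)\Rightarrow(1)$. Implication $(1)\Rightarrow(2)$ is immediate from Theorem~\ref{thm:topology-approach}, which holds unconditionally, and $(4)\Rightarrow(1)$ is trivial since a bounded-step protocol is in particular a protocol.

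For $(3)\Rightarrow(4)$, given $\delta_T$ as in~(3), I build a protocol in which each process~$p$ takes exactly~$T$ local steps and then decides the label of $\delta_T(B_T(\pi_p(E)))$, where $B_T(\pi_p(E))$ is the unique open $2^{-T}$-ball in $\timecview{p}{T}$ containing its current local view; every correct process terminates within a bounded number of steps. Validity follows from the definition of $\Xi_T$: for any execution $E$ with input simplex~$\sigma$, the set $\varphi=\{\pi_p(E)\mid p\in\correct(E)\}$ lies in $\Xi(\sigma)$, hence $B_T[\varphi]\in\Xi_T(\sigma)$, and the resulting decision simplex equals $\delta_T[B_T[\varphi]]\in(\delta_T\circ\Xi_T)(\sigma)\subseteq\Delta(\sigma)$ by the carrier hypothesis.

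The substantive content lies in $(2)\Rightarrow(3)$. Since $V(\outputs)$ is finite and carries the discrete topology, continuity of $\delta:\protocol\to\outputs$ forces $\delta$ to be locally constant on each~$\cview_p$: for every $\alpha\in\cview_p$ there is an $\varepsilon(\alpha)>0$ such that $\delta$ is constant on $B_{\varepsilon(\alpha)}(\alpha)$. Compactness of $V(\protocol)$ carries over to each~$\cview_p$, so one passes to a finite subcover by such balls and lets $\varepsilon_0>0$ be its smallest radius. The ultrametric property (Lemma~\ref{lem:ultrametric:balls}), combined with the fact that in an ultrametric every point of a ball is a center, yields $B_{\varepsilon_0}(\alpha)\subseteq B_{\varepsilon(\alpha_i)}(\alpha_i)$ whenever $\alpha$ lies in the subcover element $B_{\varepsilon(\alpha_i)}(\alpha_i)$, so $\delta$ is constant on every $\varepsilon_0$-ball. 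Because $\Pi$ is finite, a single~$T$ with $2^{-T}\leq\varepsilon_0$ works uniformly for all processes, and one defines $\delta_T(B)=\delta(\alpha)$ for any $\alpha\in B$. Chromaticity of $\delta_T$ is inherited from that of $\delta$, simpliciality follows because every simplex of $\timeprotocol{T}$ is witnessed by a simplex of $\protocol$ whose $\delta$-image equals its $\delta_T$-image, and $\delta_T\circ\Xi_T=\delta\circ\Xi$ on simplices, so the carrier condition transfers.

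The main obstacle is the uniform-scale extraction in $(2)\Rightarrow(3)$: pointwise local constancy into a discrete target must be upgraded to a single radius valid everywhere, requiring both compactness (to reduce to a finite subcover) and the ultrametric property (to compare balls of different radii at distinct centers). Once this uniform~$T$ is secured, the remaining verifications are bookkeeping between the maps $B_T$, $\Xi$, $\Xi_T$, $\delta$, and $\delta_T$.
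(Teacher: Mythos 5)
Your proposal is correct, and it follows the paper's proof almost exactly: the same cycle $(1)\Rightarrow(2)\Rightarrow(3)\Rightarrow(4)\Rightarrow(1)$, with $(1)\Rightarrow(2)$ delegated to Theorem~\ref{thm:topology-approach}, $(4)\Rightarrow(1)$ trivial, and the same bounded-step protocol for $(3)\Rightarrow(4)$. The only genuine divergence is the mechanism for extracting the uniform scale~$T$ in $(2)\Rightarrow(3)$. The paper first observes that the image $D=\delta[V(\protocol)]$ is finite, then uses the fact that the finitely many pairwise disjoint compact fibers $\delta^{-1}[\{o\}]$ are at strictly positive distance from one another and picks $2^{-T}$ below the minimum such distance. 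You instead run a Lebesgue-number-style argument: local constancy of~$\delta$ into the discrete target gives a cover of each compact~$\cview_p$ by balls on which~$\delta$ is constant, a finite subcover yields a minimum radius~$\varepsilon_0$, and the ultrametric nesting of balls (every point of a ball is a center) shows~$\delta$ is constant on every $\varepsilon_0$-ball. Both arguments are sound; the paper's buys finiteness of the image as an explicit intermediate fact and needs only the metric (not ultrametric) separation of compact sets, while yours exploits the ultrametric structure already established in Lemma~\ref{lem:ultrametric:balls} and avoids discussing the fibers altogether. The remaining verifications (well-definedness, chromaticity, simpliciality via the witnessing simplex of~$\protocol$, and $\delta_T\circ\Xi_T=\delta\circ\Xi$) match the paper's.
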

\begin{proof}
We will show the implication chain (1)$\Rightarrow$(2)$\Rightarrow$(3)$\Rightarrow$(4)$\Rightarrow$(1).The implication (1)$\Rightarrow$(2) is included in Theorem~\ref{thm:topology-approach} and the implication (4)$\Rightarrow$(1) is trivial.
It thus remains to prove the implications (2)$\Rightarrow$(3) and (3)$\Rightarrow$(4).

To prove (2)$\Rightarrow$(3), we first note that the image $D=\delta[V(\protocol)]\subseteq V(\outputs)$ of the decision map is finite since it is compact as the continuous image of a compact set and the only compact sets in a discrete space are the finite sets.
The inverse images of elements of~$D$ thus form a finite partition of the set of vertices of~$\protocol$:
\begin{equation}
V(\protocol)
=
\bigsqcup_{o\in D} \delta^{-1}[\{o\}]
\end{equation}
Since the singleton sets~$\{o\}$ are closed in the discrete topology, the inverse images $\delta^{-1}[\{o\}]$ are closed by continuity of~$\delta$.
Being closed subsets of a compact space, they are also compact.
Two disjoint compact sets in a metric space have a strictly positive distance:
\begin{equation}
\exists \varepsilon_{o,o'} > 0 \colon\quad
d\big( \delta^{-1}[\{o\}] \ ,\ \delta^{-1}[\{o'\}] \big)
=
\inf_{\substack{\alpha\in \delta^{-1}[\{o\}]\\\beta\in\delta^{-1}[\{o'\}] }} d(\alpha,\beta)
\geq
\varepsilon_{o,o'}
\end{equation}
for all $o,o'\in D$ with $o\neq o'$.
Setting $\varepsilon = \inf_{o,o'\in D, o\neq o'} \varepsilon_{o,o'}$, we have $\varepsilon>0$ since~$D$ is finite.
Let~$T\geq0$ be the smallest nonnegative integer such that $2^{-T}<\varepsilon$.
For every open $2^{-T}$-ball $B_T(\alpha) \in V(\timeprotocol{T})$ there is a unique~$o\in D$ such that $B_T(\alpha)\subseteq \delta^{-1}[\{o\}]$:
Otherwise, if~$B_T(\alpha)$ contained vertices~$\beta$ and~$\gamma$ of at least two components $\delta^{-1}[\{o\}]$ and $\delta^{-1}[\{o'\}]$ with $o\neq o'$, respectively, then we would have $2^{-T}\geq \diam(B_T(\alpha))\geq d(\beta,\gamma) \geq \varepsilon_{o,o'} \geq \varepsilon$, which is incompatible with the choice of~$T$.
We can thus unambiguously define $\delta_T(B_T(\alpha)) = o$.
It remains to show that~$\delta_T$ is simplicial, chromatic, and that~$\delta_T\circ\Xi_T$ is carried by~$\Delta$.
Chromaticity of~$\delta_T$ follows from the fact that $d(\alpha,\beta)=2 > 2^{-T}$ whenever $\alpha\in\cview_p$ and $\beta\in\cview_q$ with $p\neq q$, so all $\beta\in B_T(\alpha)$ have the same color~$p$, which is the same as that of~$o$ since~$\delta$ is chromatic.
To prove that~$\delta_T$ is simplicial, let~$\sigma$ be a simplex of~$\timeprotocol{T}$.
By definition of the complex~$\timeprotocol{T}$, there exists a simplex~$\tau$ of~$\protocol$ such that every element of~$\sigma$ contains one vertex of~$\tau$.
By definition of~$\delta_T$, the set~$\delta_T[\sigma]$ is equal to~$\delta[\tau]$, which is a simplex of~$\outputs$ since~$\delta$ is simplicial.
Likewise, we have $(\delta_T\circ\Xi_T)(\sigma) = (\delta\circ\Xi)(\sigma)$ for every input simplex~$\sigma$ in~$\inputs$, which is a subcomplex of~$\Delta(\sigma)$ since~$\delta\circ\Xi$ is carried by~$\Delta$.
Hence~$\delta_T\circ\Xi_T$ is carried by~$\Delta$ as well.

To prove (3)$\Rightarrow$(4), we define the following decision protocol for process~$p$:
In its $T$\textsuperscript{th} local step, process~$p$ decides the value $\delta_T(B_T(\alpha))$ where~$\alpha$ is any local-view sequence $\alpha\in\cview_p$ in which in the local views up to index~$T$ coincide with its own local views up to now in the current execution.
By the definition of the metric~$d_p$ on~$\cview_p$, all these~$\alpha$ lead to the same open $2^{-T}$-ball $B_T(\alpha)$.
The decision value of~$p$ is thus unambiguous.
The fact that the task specification is satisfied by this protocol follows as usual from the hypotheses on~$\delta_T$.
\end{proof}

On the other hand, if~$V(\protocol)$ is not compact, then the equivalence in Theorem~\ref{thm:projective:limit:compact} need not hold, as is shown by the example in Section~\ref{sec:need-conitnuity}.

\section{Conclusion}

We put together combinatorial and point-set topological arguments to 
prove a generalized asynchronous computability theorem,
which applies also to non-compact computation models.
This relies on showing that in non-compact models, protocols solving 
tasks correspond to simplicial maps that need to be \emph{continuous}.
We show an application to the \emph{set agreement} task. 
We also show that the usual finite-time 
protocol complex, where protocols and simplicial maps are the same, suffices 
for \emph{all} compact models.

It would be interesting to find other computation models and tasks 
where our techniques, and the generalized ACT, in particular,
can be applied.
Another intriguing direction for future research is to characterize which 
computation models lead to non-compact topological objects.

\newpage

\bibliographystyle{plain}
\bibliography{bibl}

\end{document}